\newtheorem{theorem}{Theorem}[section]
\newtheorem{definition}[theorem]{Definition}
\newtheorem{lemma}[theorem]{Lemma}
\newtheorem{corollary}[theorem]{Corollary}
\newenvironment{protocol}{\begin{framed}}{\vspace{-2ex}\end{framed}\vspace{-2ex}}
\renewcommand{\H}{{\cal H}}
\newcommand{\D}{{\cal D}}
\newcommand{\ket}[1]{| #1 \rangle}
\newcommand{\bra}[1]{\langle #1 |}
\newcommand{\ketbra}[2]{|#1\rangle \!\langle #2|}
\newcommand{\proj}[1]{\ketbra{#1}{#1}}
\newcommand{\zero}{\ket{0}}
\newcommand{\one}{\ket{1}}
\newcommand{\id}{\Bbb{I}}
\newcommand{\tr}{\mathrm{tr}}
\newcommand{\I}{\mathbb{I}}
\newcommand{\dist}{\delta}
\renewcommand{\o}{\otimes}
\newcommand{\set}[1]{\left \{#1\right \}}
\newcommand{\Set}[2]{\left\{ #1 :\, #2\right\}}
\newcommand{\C}{\Bbb{C}}
\newcommand{\R}{\Bbb{R}}
\renewcommand{\P}{P}
\DeclareMathOperator{\Hone}{H}
\DeclareMathOperator{\h}{h}
\newcommand{\ch}{ch}
\newcommand{\key}{K}
\newcommand{\idealkey}{\tilde{K}}
\DeclareMathOperator{\Hull}{Hull}
\newcommand{\eps}{\varepsilon}
\renewcommand{\d}{d}
\newcommand{\user}{{\cal U}}
\newcommand{\PC}{{\sigma}}  
\newcommand{\dP}{\hat{P}}   
\newcommand{\SP}{{\sf PV}}
\newcommand{\basicSP}{\SP_{\text{\rm\tiny \!BB84}}}
\newcommand{\pos}{\text{$po\hspace{-0.1ex}s$}}
\newcommand{\apos}{\text{$apo\hspace{-0.1ex}s$}}
\newcommand{\AUTH}{{\sf AUTH}}
\newcommand{\wAUTH}{{\sf wAUTH}}
\newcommand{\code}{N}
\newcommand{\Cbrl}{C_{\mbox{\tiny\rm $\ell$-BR}}}
\newcommand{\cl}{cl}
\newcommand{\SPch}{{\sf Chlg}}
\newcommand{\SPresp}{{\sf Resp}}
\newcommand{\SPver}{{\sf Ver}}
\newcommand{\url}[1]{{\tt #1}}
\newcommand{\NPE}{No-PE\xspace}
\newcommand{\note}[1]{}
\begin{document}

\title{%
\vspace{-2cm}  
Position-Based Quantum Cryptography: \\ Impossibility and Constructions \ \\ \ }
\author{
\ \ \ \ \ \ \ \ \ \
Harry Buhrman\thanks{Centrum Wiskunde \& Informatica (CWI) and
  University of Amsterdam, The Netherlands. Email: \texttt{Harry.Buhrman@cwi.nl}. Supported
by a NWO VICI grant and the EU 7th framework grant QCS.} \and
Nishanth Chandran\thanks{Department of Computer Science, UCLA, Los Angeles, CA, USA. Email: \texttt{\{nishanth, gelles\}@cs.ucla.edu}.
Supported in part by
NSF grants 0716835, 0716389, 0830803, and 0916574.} \and
Serge Fehr\thanks{Centrum Wiskunde \& Informatica (CWI), Amsterdam, The Netherlands.
Email: \texttt{Serge.Fehr@cwi.nl}.}  \and
Ran Gelles$^\dag$\and
\ \ \ \ \ \ \ \ \ \
Vipul Goyal\thanks{Microsoft Research,
Bangalore, India.
Email: \texttt{vipul@microsoft.com}.} \and
Rafail Ostrovsky\thanks{Department of Computer Science and Mathematics, UCLA,
Los Angeles, CA, USA.
Email: \texttt{rafail@cs.ucla.edu}.
Supported in part by
IBM Faculty Award, Xerox Innovation Group Award,
the Okawa Foundation Award, Intel, Teradata, BSF grant 2008411,
NSF grants 0716835, 0716389, 0830803, 0916574 and U.C.~MICRO grant.} \and
Christian Schaffner\thanks{University of Amsterdam and Centrum
  Wiskunde \& Informatica (CWI), Amsterdam, The Netherlands. Email:
  \texttt{c.schaffner@uva.nl}. Supported by a NWO VENI grant. } \and
}
\date{}
\maketitle

\begin{abstract}
  In this work, we study position-based cryptography in the quantum
  setting.  The aim is to use the geographical position of a party as
  its only credential.  On the negative side, we show that if
  adversaries are allowed to share an arbitrarily large entangled
  quantum state, no secure position-verification is possible at all.
  To this end, we prove the following very general result.  Assume
  that Alice and Bob hold respectively subsystems $A$ and $B$ of a
  (possibly) unknown quantum state $\ket{\psi} \in \H_A \otimes \H_B$.
  Their goal is to calculate and share a new state $\ket{\varphi} =
  U\ket{\psi}$, where $U$ is a fixed unitary operation.  The question
  that we ask is how many rounds of mutual communication are
  needed. It is easy to achieve such a task using two rounds of
  classical communication, whereas in general, it is impossible
  with no communication at all.

  Surprisingly, in case Alice and Bob share enough entanglement to
  start with and we allow an arbitrarily small failure probability, we
  show that the same task can be done using a {\em single} round of
  classical communication in which Alice and Bob simultaneously
  exchange two classical messages.  Actually, we prove that a relaxed
  version of the task can be done with {\em no} communication at all,
  where the task is to compute instead a state $\ket{\varphi'}$ that
  coincides with $\ket{\varphi} = U\ket{\psi}$ up to local operations
  on $A$ and on $B$, which are determined by classical information
  held by Alice and Bob. The one-round scheme for the original task
  then follows as a simple corollary.  We also show that these results
  generalize to more players.  As a consequence, we show a generic
  attack that breaks any position-verification scheme.

  On the positive side, we show that if adversaries do not share any
  entangled quantum state but can compute arbitrary quantum
  operations, secure position-verification is achievable.  Jointly,
  these results suggest the interesting question whether secure
  position-verification is possible in case of a bounded amount of
  entanglement. Our positive result can be interpreted as resolving
  this question in the simplest case, where the bound is set to zero.

  In models where secure positioning is achievable, it has a number of
  interesting applications. For example, it enables secure
  communication over an insecure channel without having any pre-shared
  key, with the guarantee that only a party at a specific location can
  learn the content of the conversation. More generally, we show that
  in settings where secure position-verification is achievable, other
  position-based cryptographic schemes are possible as well, such as
  secure position-based authentication and position-based key
  agreement.
\end{abstract}


\section{Introduction}

\subsection{Background}
The goal
of {\em position-based cryptography} is to use the geographical position
of a party as its only ``credential".
For example, one would like to send a
message to a party at a geographical position~$\pos$ with the
guarantee that the party can decrypt the message only if he or she
is physically present at~$\pos$.
The general concept of position-based cryptography was introduced by Chandran, Goyal, Moriarty and Ostrovsky~\cite{CGMO09}; certain specific related tasks have been considered before under different names (see below and Section~\ref{sec:RelatedWork}).

A central task in position-based cryptography is the problem of {\em
  position-verification}.  We have a {\em prover}~$P$ at position~$\pos$,
  wishing to convince a set of {\em verifiers} $V_0,\ldots,V_k$
(at different points in geographical space) that~$P$ is indeed at that
position~$\pos$. The prover can run an interactive protocol with the
verifiers in order to convince them.  The main technique for such a
protocol is known as distance bounding~\cite{BC93}. In this technique,
a verifier sends a random nonce to~$P$ and measures the time taken for
$P$ to reply back with this value.  Assuming that the speed of
communication is bounded by the speed of light, this technique gives
an upper bound on the distance of~$P$ from the verifier.

The problem of secure positioning has been studied before in the
field of wireless security, and there have been several proposals
for this task (\cite{BC93,SSW,VN04,B04,CH05,SP05,ZLFW06,CCS06}).
However, \cite{CGMO09} shows that there exists no
protocol for secure positioning that offers security in the presence
of {\em multiple colluding} adversaries. In other words, the set of
verifiers cannot distinguish between the case when they are
interacting with an honest prover at~$\pos$ and the case when they
are interacting with multiple colluding dishonest provers, none of
which is at position~$\pos$. Their impossibility result holds even
if one makes computational hardness assumptions, and it also rules out
most other interesting position-based cryptographic tasks.

In light of the strong impossibility result, \cite{CGMO09} considers a setting that assumes restrictions on the parties' storage capabilities, called the Bounded-Retrieval Model (BRM) in the full version of \cite{CGMO09}, and constructs
secure protocols for 
position-verification and for position-based key exchange (wherein the
verifiers, in addition to verifying the position claim of a prover,
also exchange a secret key with the prover). While these protocols
give us a way to realize position-based cryptography, the underlying setting is relatively hard to justify in practice. 

This leaves us with the question: are there any other
assumptions or settings in which position-based cryptography is
realizable?

\subsection{Our Approach and Our Results}

In this work, we study position-based cryptography in the {\em
  quantum} setting. To start with, let us briefly explain why moving
to the quantum setting might be useful. The impossibility result
of~\cite{CGMO09} relies heavily on the fact that an adversary can
locally store all information she receives {\em and} at the same time
share this information with other colluding adversaries, located
elsewhere.  Recall that the positive result of~\cite{CGMO09} in the
BRM circumvents the impossibility result by assuming that an adversary
{\em cannot} store all information he receives.  By considering the
quantum setting, one may be able to circumvent the impossibility
result thanks to the following observation. If some information is
encoded into a quantum state, then the above attack fails due to the
no-cloning principle: the adversary can either store the quantum state
or send it to a colluding adversary (or do something in-between, like
store part of it), but \emph{not both}.

However, this intuition turns out to be not completely accurate. Once
the adversaries pre-share entangled states, they can make use of
quantum teleportation~\cite{BBCJPW93}.  Although teleportation on its
own does not appear to immediately conflict with the above intuition, we
show that, based on techniques by Vaidman~\cite{Vaidman03},
adversaries holding a large amount of entangled quantum states can
perform \emph{instantaneous nonlocal quantum computation}, which in particular implies that they
can compute any unitary operation on a state shared between them,
using only local operations and {\em one} round of classical mutual
communication. Based on this technique, we show how a coalition of
adversaries can attack and break any position-verification scheme.

Interestingly, sharing entangled quantum systems is vital for
attacking the position-verification scheme.
We show that there exist schemes that are secure in the
information-theoretic sense, if the adversary is not allowed to
pre-share or maintain entanglement. 
Furthermore, we show how to construct secure protocols for
several position-based cryptographic tasks: position-verification,
authentication, and key exchange.

This leads to an interesting open question regarding the amount of
pre-shared entanglement required to break the positioning scheme: the
case of a large amount of pre-shared states yields a complete break of
any scheme while having no pre-shared states leads to
information-theoretically secure schemes. The threshold of pre-shared
quantum systems that keeps the system secure is yet unknown.

\subsection{Related Work}\label{sec:RelatedWork}

To the best of our knowledge, quantum schemes for position-verification have first been considered by Kent in 2002 under the name
of ``quantum tagging''. Together with Munro, Spiller and Beausoleil, a
patent for an (insecure) scheme was filed for HP Labs in 2004 and
granted in 2006~\cite{KMSB06}. Their results have not appeared in the
academic literature until 2010~\cite{KMS10}. In that paper, they
describe several basic schemes and describe how to break them using
teleportation-based attacks. They propose other variations (Schemes
IV--VI in~\cite{KMS10}) not suspect to their teleportation attack and
leave their security as an open question. Our general attack shows
that these schemes are insecure as well.

Concurrent and independent of our work and the work on
quantum tagging described above, the approach of using quantum
techniques for secure position-verification was proposed by Malaney
~\cite{Mal10a, Mal10b}. However, the proposed scheme is merely claimed
secure, and no rigorous security analysis is provided. As pointed out
in~\cite{KMS10}, Malaney's schemes can also be broken by a
teleportation-based attack. 
Chandran et al.\@ have proposed and proved secure a quantum scheme for
position-verification~\cite{CFGGO10}. However, their proof implicitly assumed that the adversaries have no pre-shared entanglement; as shown
in~\cite{KMS10}, their scheme also becomes insecure without this assumption.

In a subsequent paper~\cite{LL11}, Lau and Lo use similar ideas as
in~\cite{KMS10} to show the insecurity of position-verification
schemes that are of a certain (yet rather restricted) form, which
include the schemes from~\cite{Mal10a,Mal10b} and~\cite{CFGGO10}.
Furthermore, they propose a position-verification scheme that resists
their attack, and they conjecture it secure. While these protocols 
might be secure if the
adversaries do not pre-share entanglement, our attack shows that all
of them are insecure in general.

In a recent note~\cite{Kent10}, Kent considers a different model for
position-based cryptography where the prover's position is \emph{not} his only credential, but he is assumed to additionally
share with the verifiers a classical key unknown to the adversary. In
this case, quantum key distribution can be used to expand that key ad
infinitum. This classical key stream is then used as authentication
resource.

The idea of performing ``instantaneous measurements of nonlocal
variables'' has been put forward by Vaidman~\cite{Vaidman03} and was
further investigated by Clark et al.~\cite{CCJP10}. The concept of
instantaneous nonlocal quantum computation presented here is an
extension of Vaidman's task. 
After the appearance and circulation of our work, Beigi and
K\"onig~\cite{BK11} used the technique of port-based teleportation by
Ishizaka and Hiroshima~\cite{IH08,IH09} to reduce the amount of
entanglement required to perform instantaneous nonlocal quantum
computation (from our double exponential) to exponential.

In~\cite{GLM02}, Giovannetti et al.~show how to measure the distance
between two parties by quantum cryptographic means so that only
trusted people have access to the result. This is a different kind
of problem than what we consider, and the techniques used there
are not applicable in our setting.

\subsection{Our Attack and Our Schemes in More Detail}

\paragraph{Position-Verification - A Simple Approach.}\label{sec:SimpleApproach}

Let us briefly discuss the $1$-dimensional case in which we have
two verifiers $V_0$ and $V_1$, and a prover~$P$ at position~$\pos$
that lies on the straight line between~$V_0$ and~$V_1$. Now, to verify
$P$'s position, $V_0$~sends a BB84 qubit $H^\theta\ket{x}$ to~$P$, and
$V_1$~sends the corresponding basis~$\theta$ to~$P$. The sending of
these messages is timed in such a way that $H^\theta\ket{x}$ and
$\theta$~arrive at position~$\pos$ at the same time. $P$ has to
measure the qubit in basis~$\theta$ to obtain~$x$, and immediately
send~$x$ to both~$V_0$ and~$V_1$, who verify the correctness of~$x$ and if
it has arrived ``in time''.

The intuition for this scheme is the following.
Consider a dishonest prover $\dP_0$ between $V_0$ and~$P$, and a
dishonest prover $\dP_1$ between $V_1$ and $P$.
(It is not too hard to see that additional dishonest
provers do not help.) When $\dP_0$ receives the BB84 qubit, she does
not know yet the corresponding basis $\theta$. Thus, if she measures
it immediately when she receives it, she is likely to measure it
in the wrong basis and $\dP_0$ and $\dP_1$ will not be able to
provide the correct $x$. However, if she waits until she knows the
basis~$\theta$, $\dP_0$ and $\dP_1$ will be too late in sending $x$ to~$V_1$ in time.
Similarly, if she forwards the BB84 qubit to $\dP_1$, who receives
$\theta$ before $\dP_0$ does, then $\dP_0$ and $\dP_1$ will be too
late in sending $x$ to~$V_0$.
It seems that in order to break the scheme, $\dP_0$ needs to
store the qubit until she receives the basis~$\theta$ and at the same
time send a copy of it to~$\dP_1$. But such actions are excluded by the
no-cloning principle.

\paragraph{The Attack and Instantaneous Nonlocal Quantum Computation.}

The above intuition turns out to be wrong. Using pre-shared
entanglement, $\dP_0$ and $\dP_1$ can perform quantum teleportation
which enables them (in some sense) to act coherently on the complete
state immediately upon reception. Combining this fact with the observation by
Kent  et al.~\cite{KMS10} that the Pauli-corrections resulting
from the teleportation commute with the actions of the honest prover
in the above protocol shows that colluding adversaries can perfectly
break the protocol.

Much more generally, we will show how to break \emph{any}
position-verification scheme, possibly consisting of multiple
(and interleaved) rounds. To this end, we will show how to perform \emph{instantaneous
  nonlocal quantum computation}.  In particular, we prove that any unitary operation
$U$ acting on a composite system shared between players can be
computed using only a single round of mutual classical communication.  
Based on ideas by Vaidman~\cite{Vaidman03}, the
players teleport quantum states back and forth many times in a clever
way, {\em without} awaiting the classical measurement outcomes from
the other party's teleportations.

\paragraph{Position-Verification in the No Pre-shared Entanglement (\NPE) Model.}

On the other hand, the above intuition is correct in the {\em no pre-shared entanglement} (\NPE) model,
where the adversaries are not allowed to have pre-shared entangled
quantum states prior the execution the protocol, or, more generally, prior the execution of each round of the protocol in case of multi-round schemes. 
Even though this model may be somewhat unrealistic and
artificial, analyzing protocols in this setting serves as stepping
stone to obtaining protocols which tolerate adversaries who pre-share and maintain
some {\em limited} amount of entanglement. 
But also, rigorously proving security in the restrictive (for the adversary) \NPE model is
already non-trivial and requires heavy machinery. Our proof uses the {\em strong complementary information trade-off} (CIT) due to Renes and Boileau~\cite{RB09}, 
and it guarantees that
for any strategy, the success probability of $\dP_0$ and $\dP_1$ is
bounded by approximately~$0.89$.  By repeating the above simple scheme
sequentially, we get a secure multi-round positioning scheme with
exponentially small soundness error. 
We note that when performing sequential repetitions in the \NPE model, 
the adversaries must enter each round with no entanglement; thus, they are not allowed to generate entanglement in one round, store it, and use it in the next round(s).

\paragraph{Position-based authentication and key-exchange in the \NPE Model.}
Our position-based authentication scheme is based on our position-verification
scheme. The idea is to start with a ``weak''
authentication scheme for a $1$-bit message $m$:
the verifiers and $P$ execute the secure position-verification
scheme; if $P$ wishes to authenticate $m=1$, then $P$ correctly
finishes the scheme by sending $x$ back, but if $P$ wishes to
authenticate $m=0$, $P$ sends back an ``erasure'' $\bot$
instead of the correct reply $x$ with some probability $q$ (which
needs to be carefully chosen). This authentication
scheme is weak in the sense that turning $1$ into $0$ is easy for
the adversary, but turning a $0$ into a $1$ fails with constant
probability.

The idea is to use a suitable {\em balanced} encoding of the actual
message to be authenticated, so that for any two messages, the
adversary needs to turn many $0$'s into $1$'s. Unfortunately, an
arbitrary balanced encoding is not good enough. The reason is
that we do not assume the verifiers and the honest $P$ to be
synchronized. This asynchrony allows the adversary to make use of honest $P$ who
is authenticating one index of the encoded message, in order to
authenticate another index of the modified encoded message towards the
verifiers.

Nevertheless, we show that the above approach does work for carefully chosen codes.
We show that, for instance, the bit-wise encoding which maps
$0$ into $00...0\,11...1$ and $1$ into
$11...1\,00...0$ is such a code.

Our solution borrows some ideas from~\cite{RW03, KR09, CKOR10} on authentication based on weak secrets. However, since in our setting we cannot do ”liveness” tests (to check that the verifier is alive in the protocol), the techniques from~\cite{RW03, KR09, CKOR10} do not help us directly.

Given a position-based authentication scheme, one can immediately
obtain a position-based key-exchange scheme simply by (essentially)
executing an arbitrary quantum-key-distribution scheme
(e.g.~\cite{BB84}), which assumes an authenticated classical
communication channel, and authenticate the classical communication
by means of the position-based authentication scheme.

\subsection{Organization of the paper}
In Section~\ref{sec:preliminaries}, we begin by introducing
notation, and presenting the relevant background from quantum
information theory. In Section~\ref{sec:setup}, we describe the
problem of position-verification and define our standard quantum
model, as well as the \NPE model in more detail. 
A protocol for computing any unitary operation using local operations and one round 
of classical communication is provided and analyzed in Section~\ref{sec:nonlocal}, and
in Section~\ref{sec:Impossibility} we conclude that there does not exist any
protocol for position-verification (and hence, any protocol for
position-based cryptographic tasks) in the standard quantum model.
We present our position-verification protocol in the \NPE model in
Section~\ref{sec:basicSP}. Section~\ref{sec:auth+KE} is devoted to
our position-based authentication protocol and showing how to
combine the above tools to obtain position-based key exchange.

\section{Preliminaries}\label{sec:preliminaries}

\subsection{Notation and Terminology}\label{sec:Notation}

We assume the reader to be familiar with the basic concepts of quantum
information theory and refer to~\cite{NC00} for an excellent
introduction; we merely fix some notation.

\paragraph{Qubits.}

A {\em qubit} is a quantum system $A$ with a 2-dimensional state space~$\H_A = \C^2$.
The {\em computational basis} $\set{\ket{0},\ket{1}}$ (for a qubit) is given by $\ket{0} = {1 \choose 0}$ and $\ket{1} = {0 \choose 1}$, and the {\em Hadamard basis} by $H\set{\ket{0},\ket{1}} = \set{H\ket{0},H\ket{1}}$, where $H$ denotes the 2-dimensional {\em Hadamard matrix}, which maps $\zero$ to $(\zero+\one)/\sqrt{2}$ and $\one$ to $(\zero-\one)/\sqrt{2}$.
The state space of an $n$-qubit system $A = A_1\cdots A_n$ is given by
the $2^n$-dimensional space $\H_A = (\C^2)^{\otimes n} = \C^2 \otimes \cdots \otimes \C^2$.

Since we mainly use the above two bases, we can simplify terminology and notation by identifying the computational basis $\set{\ket{0},\ket{1}}$ with the bit $0$ and the Hadamard basis $H\set{\ket{0},\ket{1}}$ with the bit $1$. Hence, when we say that an $n$-qubit state $\ket{\psi} \in (\C^2)^{\otimes n}$ is measured in basis $\theta \in \set{0,1}^n$, we mean that the state is measured qubit-wise where basis $H^{\theta_i}\set{\ket{0},\ket{1}}$ is used for the $i$-th qubit.
As a result of the measurement, the string $x \in \set{0,1}^n$ is observed with probability $|\bra{\psi}H^\theta\ket{x}|^2$, where $H^\theta = H^{\theta_1} \otimes \cdots \otimes H^{\theta_n}$ and $\ket{x} = \ket{x_1}\otimes\cdots\otimes\ket{x_n}$.

An important example of a $2$-qubit state is the {\em EPR pair}, which is given by $\ket{\Phi_{AB}} = (\zero\zero + \one\one)/\sqrt{2} \in \H_A \otimes \H_B = \C^2 \otimes \C^2$ and has the following properties: if qubit $A$ is measured in the computational basis, a uniformly random bit $x \in \set{0,1}$ is observed and qubit $B$ collapses to $\ket{x}$. Similarly, if qubit $A$ is measured in the Hadamard basis, a uniformly random bit $x \in \set{0,1}$ is observed and qubit $B$ collapses to $H\ket{x}$.

\paragraph{Density Matrices and Trace Distance.}

For any complex Hilbert space $\H$, we write $\D(\H)$ for the set of all {\em density matrices} acting on $\H$.
We measure closeness of two density matrices $\rho$ and $\sigma$ in $\D(\H)$
by their {\em trace distance}: $\dist(\rho,\sigma) := \frac12
\tr|\rho-\sigma|$. One can show that for any physical processing of two quantum states described by
$\rho$ and $\sigma$, respectively, the two states behave in an
indistinguishable way except with probability at most
$\dist(\rho,\sigma)$. Thus, informally, if $\dist(\rho,\sigma)$ is
very small, then without making a significant error, the two quantum
states can be considered equal.

\paragraph{Classical and Hybrid Systems (and States).}

Subsystem $X$ of a bipartite quantum system $XE$ is called {\em classical}, if the state of $XE$ is given by a density matrix of the form
$
\rho_{XE} = \sum_{x \in \cal X} P_X(x) \proj{x} \otimes \rho_{E}^x \, ,
$
where $\cal X$ is a finite set of cardinality $|{\cal X}| = \dim(\H_X)$, $P_X:{\cal X} \rightarrow [0,1]$ is a probability distribution, $\set{\ket{x}}_{x \in \cal X}$ is some fixed orthonormal basis of $\H_X$, and $\rho_E^x$ is a density matrix on $\H_E$ for every \mbox{$x \in \cal X$}. Such a state, called {\em hybrid} state (also known as {\em cq-}state, for {\em c}lassical and {\em q}uantum), can equivalently be understood as consisting of a {\em random variable} $X$ with distribution $P_X$ and range $\cal X$, and a system $E$ that is in state $\rho_E^x$ exactly when $X$ takes on the value $x$. This formalism naturally extends to two (or more) classical systems $X$, $Y$ etc.\@ as well as to two (or more) quantum systems.

\paragraph{Teleportation.}

The goal of teleportation is to transfer a quantum state from one location to another by only communicating classical information. Teleportation requires pre-shared entanglement among the two locations.
Specifically, to teleport a qubit $Q$  in an arbitrary (and typically unknown) state $\ket{\psi}$ from Alice to Bob, Alice performs a
Bell-measurement on $Q$ and her half of an EPR-pair, yielding a
classical measurement outcome $k \in \set{0,1,2,3}$.
Instantaneously, the other half of the corresponding EPR pair, which is held by Bob, turns into
the state $\PC_k^\dag\ket{\psi}$, where $\PC_0, \PC_1, \PC_2, \PC_3$ denote the
four Pauli-corrections $\set{\id,X,Z,XZ}$, respectively, and $\PC^\dag$
denotes the complex conjugate of the transpose of $\PC$. The classical information $k$ is then
communicated to Bob who can recover the state $\ket{\psi}$ by performing $\PC_k$ on his EPR half.
Note that the operator $\PC_k$ is Hermitian and unitary, thus $\PC^\dag_k=\PC_k$ and $\PC_k\PC^\dag_k=\id$.

\subsection{Some Quantum Information Theory}


The {\em von Neumann entropy} of a quantum state $\rho \in \D(\H)$ is given by $\Hone(\rho) := -\tr\bigl(\rho \log(\rho)\bigr)$, where here and throughout the article, $\log$ denotes the binary logarithm.
$\Hone(\rho)$ is non-negative and at most $\log(\dim(\H))$.
For a bi-partite quantum state $\rho_{AB} \in \D(\H_A \otimes \H_B)$, the {\em conditional} von Neumann entropy of $A$ given $B$ is defined as $\Hone(\rho_{AB}|B) := \Hone(\rho_{AB}) - \Hone(\rho_B)$.
In cases where the state $\rho_{AB}$ is clear from the context, we may write $\Hone(A|B)$ instead of $\Hone(\rho_{AB}|B)$.
If $X$ and $Y$ are both classical, $\Hone(X|Y)$ coincides with the classical conditional Shannon entropy.
Furthermore, in case of conditioning (partly) on a classical state, the following holds.
\begin{lemma}\label{lemma:Average}
For any tri-partite state $\rho_{ABY}$ with classical~$Y$: $\Hone(A|BY) = \sum_y P_Y(y) \Hone(\rho_{AB}^y|B)$.
\end{lemma}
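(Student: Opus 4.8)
The plan is to reduce everything to a single spectral computation about the von Neumann entropy of a block-diagonal (classical-quantum) state, and then to apply that computation twice.

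First I would unfold the definition of conditional entropy, $\Hone(A|BY) = \Hone(ABY) - \Hone(BY)$. Since $Y$ is classical, the joint state has the block-diagonal form $\rho_{ABY} = \sum_y P_Y(y)\, \rho_{AB}^y \otimes \proj{y}$ with the $\set{\ket{y}}$ orthonormal, and tracing out $A$ gives $\rho_{BY} = \sum_y P_Y(y)\, \rho_B^y \otimes \proj{y}$ where $\rho_B^y = \tr_A \rho_{AB}^y$. The two states I must take the entropy of therefore have exactly the same shape.

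The key lemma I would establish is the following: for any family of density matrices $\set{\sigma^y}$ and any orthonormal $\set{\ket{y}}$, the state $\sigma = \sum_y P_Y(y)\, \sigma^y \otimes \proj{y}$ satisfies $\Hone(\sigma) = \Hone(Y) + \sum_y P_Y(y)\, \Hone(\sigma^y)$, where $\Hone(Y) = -\sum_y P_Y(y)\log P_Y(y)$ is the Shannon entropy of $P_Y$. The point is that orthogonality of the $\ket{y}$ makes the blocks $P_Y(y)\sigma^y$ live on mutually orthogonal subspaces, so the spectrum of $\sigma$ is exactly the disjoint union of the rescaled spectra $P_Y(y)\lambda_i^y$, where the $\lambda_i^y$ are the eigenvalues of $\sigma^y$. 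Plugging these into $-\tr(\sigma\log\sigma) = -\sum_{y,i} P_Y(y)\lambda_i^y \log\!\bigl(P_Y(y)\lambda_i^y\bigr)$ and splitting the logarithm via $\log(ab)=\log a + \log b$, using $\sum_i \lambda_i^y = \tr\sigma^y = 1$, yields the claimed decomposition after collecting terms.

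Finally, I would apply this lemma to both $\rho_{ABY}$ and $\rho_{BY}$, obtaining $\Hone(ABY) = \Hone(Y) + \sum_y P_Y(y)\Hone(\rho_{AB}^y)$ and $\Hone(BY) = \Hone(Y) + \sum_y P_Y(y)\Hone(\rho_B^y)$. Subtracting, the $\Hone(Y)$ terms cancel and the remaining sum collapses to $\sum_y P_Y(y)\bigl[\Hone(\rho_{AB}^y) - \Hone(\rho_B^y)\bigr] = \sum_y P_Y(y)\Hone(\rho_{AB}^y|B)$, which is exactly the claim. The only step requiring genuine care is the spectral argument inside the key lemma — justifying that orthogonality of the $\ket{y}$ renders $\sigma$ block-diagonal so that its eigenvalues factor as $P_Y(y)\lambda_i^y$ — but this is routine linear algebra; everything else is bookkeeping and the cancellation of the two identical $\Hone(Y)$ terms.
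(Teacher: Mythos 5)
Your proof is correct and follows essentially the same route as the paper's: the paper's ``empty $B$'' step is exactly your key lemma (the spectral computation showing the eigenvalues of the cq state are $P_Y(y)\lambda_i^y$, giving $\Hone(\rho_{AY})-\Hone(\rho_Y)=\sum_y P_Y(y)\Hone(\rho_A^y)$), and the general case is likewise obtained by applying it to $\rho_{ABY}$ and $\rho_{BY}$ and cancelling the $\Hone(Y)$ contributions. No gaps.
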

Lemma~\ref{lemma:Average}
along with the concavity of $\Hone$ and Jensen's inequality
implies that for classical $Y$: $\Hone(A) \geq \Hone(A|Y) \geq 0$. 
The proof of Lemma~\ref{lemma:Average} is given in Appendix~\ref{app:technical}.

The following theorem is a generalization of the well-known Holevo bound~\cite{Holevo73} (see also~\cite{NC00}), and follows from the {\em monotonicity of mutual information}. 
Informally, it says that measuring only reduces your information. 
Formally, and tailored to the notation used here, it ensures the following.
\begin{theorem}
\label{thm:Holevo}
Let $\rho_{AB} \in \D(\H_A \otimes \H_B)$ be an arbitrary bi-partite state, and let $\rho_{AY}$ be obtained by measuring $B$ in some basis to observe (classical) $Y$. Then
$
\Hone(A|Y) \geq \Hone(A|B)
$.
\end{theorem}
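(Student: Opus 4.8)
The plan is to rewrite both conditional entropies in terms of the quantum mutual information and then reduce the claim to the monotonicity of mutual information under the measurement map. Recall that for any bipartite state $\rho_{AB}$ the mutual information is $I(A\!:\!B) = \Hone(A) + \Hone(B) - \Hone(AB)$, and unfolding the definition $\Hone(A|B) = \Hone(AB) - \Hone(B)$ immediately gives the identity $\Hone(A|B) = \Hone(A) - I(A\!:\!B)$. Since $Y$ is classical, the very same algebraic identity applies to the post-measurement hybrid (cq-)state, yielding $\Hone(A|Y) = \Hone(A) - I(A\!:\!Y)$, where $I(A\!:\!Y)$ is the genuine quantum mutual information of $\rho_{AY}$.

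The key observation is that the two $\Hone(A)$ terms coincide. Indeed, measuring $B$ in a fixed orthonormal basis $\set{\ket{y}}$ and recording the outcome in the classical register $Y$ produces the state
\[
\rho_{AY} = \sum_y (\I_A \otimes \bra{y})\,\rho_{AB}\,(\I_A \otimes \ket{y}) \otimes \proj{y},
\]
and tracing out $Y$ returns $\sum_y (\I_A \otimes \bra{y})\,\rho_{AB}\,(\I_A \otimes \ket{y}) = \tr_B(\rho_{AB}) = \rho_A$ by completeness of the basis. Hence the marginal on $A$ is left invariant by the measurement, so $\Hone(A)$ evaluated on $\rho_{AY}$ equals $\Hone(A)$ evaluated on $\rho_{AB}$. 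Subtracting the two identities from the previous paragraph then collapses the claim to
\[
\Hone(A|Y) - \Hone(A|B) = I(A\!:\!B) - I(A\!:\!Y),
\]
so that it suffices to prove $I(A\!:\!Y) \leq I(A\!:\!B)$.

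To finish, I would invoke the monotonicity of mutual information: the measurement of $B$ producing $Y$ is a completely positive trace-preserving map acting only on the $B$-subsystem, and such local processing cannot increase the mutual information with the untouched system $A$. This yields $I(A\!:\!Y) \leq I(A\!:\!B)$ and completes the argument.

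The main obstacle is the monotonicity statement itself, which is the substantive input; it is not elementary, being essentially equivalent to the strong subadditivity of von Neumann entropy. I would therefore treat it as a black box (citing the standard references in~\cite{NC00}), so that the remaining work is purely the bookkeeping above: correctly expressing $\Hone(A|Y)$ for the hybrid state and verifying that the measurement leaves $\rho_A$ intact. One must also take care that $I(A\!:\!Y)$ is read off as the true quantum mutual information of the cq-state $\rho_{AY}$ (which for classical $Y$ reduces to the expected Holevo-type quantity), so that the monotonicity bound applies verbatim.
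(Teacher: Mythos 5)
Your proof is correct and follows essentially the same route as the paper, which likewise derives the statement directly from the monotonicity of quantum mutual information under a local measurement of $B$ (treated as a CPTP map on that subsystem alone). The additional bookkeeping you supply --- rewriting both conditional entropies as $\Hone(A) - I(A\!:\!\cdot)$ and checking that the marginal $\rho_A$ is unchanged --- is exactly the right way to make that one-line reduction explicit.
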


For classical $X$ and $Y$, the Fano inequality~\cite{Fano61} (see also~\cite{CT91}) allows to bound the probability of correctly guessing $X$ when having access to $Y$. In the statement below and throughout the article, $\h:[0,1]\to[0,1]$ denotes the {\em binary entropy function} defined as $\h(p) = -p \log(p) - (1-p)\log(1-p)$ for $0 < p < 1$ and as $\h(p) = 0$ for $p = 0$ or $1$, and $\h^{-1}:[0,1]\to[0,\frac12]$ denotes its inverse on the branch $0 \leq p \leq \frac12$.

\begin{theorem}[Fano inequality]\label{thm:Fano}
Let $X$ and $Y$ be random variables with ranges $\cal X$ and $\cal Y$, respectively, and let $\hat{X}$ be a guess for $X$ computed solely from $Y$. Then $q:= \P[\hat{X} \!\neq\! X]$ satisfies
$$\h(q) + q \log(|{\cal X}|-1) \geq \Hone(X|Y)\ .$$
In particular, for binary $X$: $q \geq \h^{-1}(\Hone(X|Y))$.
\end{theorem}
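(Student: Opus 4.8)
The plan is to follow the classical information-theoretic argument organized around an auxiliary \emph{error indicator}. Define a binary random variable $E$ that equals $1$ precisely when the guess fails, i.e.\ when $\hat X \neq X$, and equals $0$ otherwise; by construction $\P[E = 1] = q$, so $\Hone(E) = \h(q)$. The crucial structural observation is that $\hat X$ is computed from $Y$ alone, so $\hat X$ is a function of $Y$, and hence $E$ is fully determined by the pair $(X,Y)$. This lets me expand a single conditional entropy in two different ways via the chain rule and then compare.

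First I would write the chain rule for $\Hone(EX \mid Y)$ in both orders:
\begin{align*}
\Hone(EX \mid Y) &= \Hone(X \mid Y) + \Hone(E \mid XY),\\
\Hone(EX \mid Y) &= \Hone(E \mid Y) + \Hone(X \mid EY).
\end{align*}
Since $E$ is determined by $(X,Y)$, the term $\Hone(E \mid XY)$ vanishes, which gives $\Hone(X \mid Y) = \Hone(E \mid Y) + \Hone(X \mid EY)$. I would then bound the two right-hand terms separately. For the first, conditioning only reduces entropy, so $\Hone(E \mid Y) \leq \Hone(E) = \h(q)$. For the second, I split on the value of $E$ using the averaging property of conditional entropy (Lemma~\ref{lemma:Average}): when $E = 0$ we have $X = \hat X$, which is a function of $Y$, so that conditional entropy is $0$; when $E = 1$ we know $X \neq \hat X$, so given $Y$ (and hence given $\hat X$) the variable $X$ ranges over at most $|{\cal X}| - 1$ values, bounding its entropy by $\log(|{\cal X}|-1)$. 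Averaging yields $\Hone(X \mid EY) \leq q \log(|{\cal X}| - 1)$, and combining the two bounds gives $\Hone(X \mid Y) \leq \h(q) + q \log(|{\cal X}| - 1)$, which is the claimed inequality.

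For the binary refinement I would set $|{\cal X}| = 2$, so that $\log(|{\cal X}|-1) = 0$ and the inequality collapses to $\h(q) \geq \Hone(X \mid Y)$. To invert this into $q \geq \h^{-1}(\Hone(X \mid Y))$ I would split into two cases according to whether $q \leq \tfrac12$: on the branch $q \leq \tfrac12$ the function $\h$ is increasing, so applying $\h^{-1}$ preserves the inequality; and if $q > \tfrac12$ the bound holds trivially because $\h^{-1}$ takes values in $[0,\tfrac12]$. The one step that demands genuine care is the counting bound $\Hone(X \mid E=1, Y) \leq \log(|{\cal X}|-1)$: I must argue cleanly that fixing a value of $Y$ already fixes the guess $\hat X$, so that the event $X \neq \hat X$ really does confine $X$ to a set of size $|{\cal X}|-1$ \emph{for each} value of $Y$, and not merely on average. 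Everything else is a routine application of the chain rule together with the concavity and monotonicity facts already recorded above.
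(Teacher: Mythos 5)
Your proof is correct. The paper does not actually prove this theorem --- it is quoted as a known result with citations to Fano and to Cover--Thomas --- and your argument is precisely the standard proof from those references: introduce the error indicator $E$, expand $\Hone(EX\mid Y)$ by the chain rule in both orders, kill $\Hone(E\mid XY)$ because $\hat X$ is a deterministic function of $Y$, and bound $\Hone(E\mid Y)\le \h(q)$ and $\Hone(X\mid EY)\le q\log(|{\cal X}|-1)$ by the counting argument you flag. The only cosmetic remark is that if one wants to allow a \emph{randomized} guess $\hat X$ computed from $Y$, your identity $\Hone(E\mid XY)=0$ no longer holds verbatim; but this is handled by the standard observation that a randomized guesser's error probability is a convex combination of deterministic ones and $\h(q)+q\log(|{\cal X}|-1)$ is concave in $q$, so the deterministic case you prove implies the general one. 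The binary specialization via monotonicity of $\h$ on $[0,\tfrac12]$ and the trivial case $q>\tfrac12$ is also handled correctly.
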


\subsection{Strong Complementary Information Tradeoff}


The following entropic uncertainty principle, called {\em strong complementary information tradeoff} (CIT) in~\cite{RB09}
 and generalized in~\cite{BCCRR10}, is at the heart of our security proofs.
It relates the uncertainty of the measurement outcome of a system $A$ with the uncertainty of the measurement outcome when the complementary basis is used instead, and it guarantees that there can coexist at most one system $E$ that has full information on {\em both} possible outcomes. Note that by the {\em complementary} basis $\bar{\theta}$ of a basis $\theta = (\theta_1,\ldots,\theta_n) \in \set{0,1}^n$, we mean the $n$-bit string $\bar{\theta} = (\bar{\theta}_1,\ldots,\bar{\theta}_n) \in \set{0,1}^n$  with $\bar{\theta}_i \neq \theta_i$ for all $i$.

\begin{theorem}[CIT]\label{thm:CIT}
Let $\ket{\psi_{AEF}} \in \H_A \otimes \H_E \otimes \H_F$ be an arbitrary tri-partite state, where $\H_A = (\C^2)^{\otimes n}$. Let the hybrid state $\rho_{XEF}$ be obtained by measuring $A$ in basis $\theta \in \set{0,1}^n$, and let the hybrid state $\sigma_{XEF}$ be obtained by measuring $A$ (of the original state $\ket{\psi_{AEF}}$) in the complementary basis $\bar{\theta}$. Then
\begin{equation*}
\Hone(\rho_{XE}|E) + \Hone(\sigma_{XF}|F) \geq n \ .
\end{equation*}
\end{theorem}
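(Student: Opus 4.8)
The plan is to recognize the statement as an entropic uncertainty relation with quantum side information for two mutually unbiased measurements, and to prove it by a chain-rule reduction to the single-qubit case. First I would reduce to $\theta = 0^n$: applying the local unitary $H^\theta$ to system $A$ replaces $\ket{\psi_{AEF}}$ by a new pure state for which measuring in $\theta$ (resp.\ $\bar\theta$) becomes measuring in the computational (resp.\ Hadamard) basis, since $\bar\theta_i - \theta_i \equiv 1 \pmod 2$ for every $i$. After this reduction the two measurements are the qubit-wise computational and Hadamard measurements, whose bases are mutually unbiased with constant overlap $|\braket{x}{H^{1^n} z}|^2 = 2^{-n}$; the target bound $n = \log(1/2^{-n})$ is exactly the ``overlap'' bound one expects from such a relation. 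The essential structural input is that $\ket{\psi_{AEF}}$ is \emph{pure}, so that $E$ and $F$ are \emph{complementary} memories; without purity the inequality is false (e.g.\ if both $E$ and $F$ held a copy of $A$, both entropies could vanish).

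The main engine is the single-qubit seed: for any pure tripartite $\ket{\psi_{AEF}}$ with $A$ a single qubit, $\Hone(\rho_{XE}|E) + \Hone(\sigma_{XF}|F) \geq 1$. I would prove this by passing to coherent measurements --- replacing each destructive measurement by the isometry $\ket{x} \mapsto \ket{x}\ket{x}$ (in the relevant basis) that copies the outcome into an ancilla --- so that the post-measurement conditional entropies are unchanged but the global states remain pure. On a pure state the duality of the conditional von Neumann entropy, $\Hone(A|B) = -\Hone(A|C)$, lets me trade the memory $E$ for the complementary memory, and the Hadamard relation between the two bases converts phase information for one measurement into amplitude information for the other. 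Combining these with subadditivity (equivalently, concavity of entropy) yields the constant $1$; pinning down this exact constant, rather than merely a positive lower bound, is the first delicate point.

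With the seed in hand I would run the induction on $n$ by peeling one qubit at a time and invoking the chain rule $\Hone(X_1\cdots X_n|E) = \sum_i \Hone(X_i|X_{<i}E)$, and likewise for the complementary outcomes conditioned on $F$, so that it suffices to prove a per-qubit bound $\Hone(X_i|X_{<i}E) + \Hone(X_i'|X_{<i}'F) \geq 1$. Here lies the genuine obstacle: the two conditioning registers $X_{<i}$ and $X_{<i}'$ record measurements of the \emph{same} earlier qubits in \emph{complementary} bases, and these two classical records cannot coexist in a single pure state, so one cannot directly cast the per-qubit statement as an instance of the single-qubit seed. The resolution, following Renes and Boileau, is to keep the already-processed qubits \emph{coherent} rather than classically recorded, and to use the entropy-duality step again to reattach each coherent register to the correct memory ($E$ for the computational side, $F$ for the Hadamard side) before applying the seed to qubit $i$. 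Managing this bookkeeping --- ensuring the partial sums telescope correctly and that each per-qubit application sees a genuinely pure, complementary-memory tripartite state --- is the crux of the whole argument; the reductions of the first paragraph and the duality identities are comparatively routine once it is set up.
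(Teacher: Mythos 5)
First, a point of reference: the paper does not prove Theorem~\ref{thm:CIT} at all --- it is imported from Renes and Boileau~\cite{RB09} (see also~\cite{BCCRR10}), and the appendix only proves Lemma~\ref{lemma:Average} and Corollary~\ref{cor:CIT}. So there is no in-paper argument to compare yours against; your proposal has to stand on its own, and unfortunately it does not.

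The fatal step is the chain-rule reduction to the per-qubit bound $\Hone(X_i|X_{<i}E) + \Hone(Z_i|Z_{<i}F) \geq 1$: that inequality is false. Take $n=2$ with $\H_E$ and $\H_F$ trivial and $\ket{\psi_A} = (\ket{00}+\ket{11})/\sqrt{2}$. The computational-basis measurement gives $X_2 = X_1$ uniform, so $\Hone(X_2|X_1 E) = 0$; since $(\ket{00}+\ket{11})/\sqrt{2} = (\ket{++}+\ket{--})/\sqrt{2}$, the Hadamard measurement likewise gives $Z_2 = Z_1$, so $\Hone(Z_2|Z_1 F) = 0$, and the $i=2$ pair sums to $0$, not $1$. (The theorem itself is satisfied: the $i=1$ pair contributes $2$ and the total is exactly $n=2$.) This shows the $n$-qubit statement genuinely does not decompose position-by-position --- the uncertainty can be distributed arbitrarily unevenly over the qubits --- so no amount of keeping registers coherent and reattaching them via entropy duality can rescue a term-by-term bound; the obstacle you flag as ``the crux'' is not a bookkeeping difficulty but a counterexample. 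The known proofs (Renes--Boileau, and the later relative-entropy arguments of Berta et al.\ and Coles et al.) treat the full $n$-qubit measurement as a single measurement with overlap $c = 2^{-n}$ and never split it up. Your reduction to $\theta = 0^n$ and the statement of the single-qubit seed are fine as far as they go (though ``duality plus subadditivity'' is a long way from a proof of even that seed), but the inductive scaffolding they are meant to support collapses. A smaller error: your claim that the inequality fails without purity is wrong --- the tripartite relation holds for arbitrary mixed $\rho_{AEF}$, since one may purify by a fourth system $G$ and use $\Hone(Z|F) \geq \Hone(Z|FG)$; in particular no state can let $E$ predict $X$ and $F$ predict $Z$ simultaneously.
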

CIT in particular implies the following (the proof is given in Appendix~\ref{app:technical}):

\begin{corollary}\label{cor:CIT}
Let $\ket{\psi_{AEF}} \in \H_A \otimes \H_E \otimes \H_F$ be an arbitrary tri-partite state, where $\H_A = (\C^2)^{\otimes n}$.
Let $\Theta$ be uniformly distributed in $\set{0,1}^n$ and let $X$ be the result of measuring $A$ in basis $\Theta$. Then
\begin{equation*}
\Hone(X|\Theta E) + \Hone(X|\Theta F) \geq n\ .
 \end{equation*}
\end{corollary}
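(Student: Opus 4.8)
The plan is to obtain the corollary by averaging the CIT inequality (Theorem~\ref{thm:CIT}) over a uniformly random choice of basis~$\Theta$, exploiting the fact that bitwise complementation $\theta \mapsto \bar\theta$ is a bijection on $\set{0,1}^n$.

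First I would rewrite each of the two terms on the left-hand side using Lemma~\ref{lemma:Average} with $Y = \Theta$. For a fixed $\theta \in \set{0,1}^n$, let $\rho^\theta_{XE}$ (respectively $\rho^\theta_{XF}$) denote the hybrid state of $XE$ (respectively $XF$) obtained by measuring $A$ in basis $\theta$; this is exactly the state $\rho_{XE}$ appearing in the statement of CIT for that particular $\theta$. Since $\Theta$ is classical and uniform, Lemma~\ref{lemma:Average} gives
\begin{equation*}
\Hone(X|\Theta E) = \frac{1}{2^n}\sum_{\theta} \Hone(\rho^\theta_{XE}|E) \qquad\text{and}\qquad \Hone(X|\Theta F) = \frac{1}{2^n}\sum_{\theta} \Hone(\rho^\theta_{XF}|F) \, .
\end{equation*}

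The key step is to reindex the second sum. Because $\theta \mapsto \bar\theta$ ranges over all of $\set{0,1}^n$ as $\theta$ does, I may replace $\theta$ by $\bar\theta$ in the second sum without changing its value, obtaining
\begin{equation*}
\Hone(X|\Theta E) + \Hone(X|\Theta F) = \frac{1}{2^n}\sum_{\theta}\Bigl(\Hone(\rho^\theta_{XE}|E) + \Hone(\rho^{\bar\theta}_{XF}|F)\Bigr) \, .
\end{equation*}
For each individual $\theta$, the state $\rho^{\bar\theta}_{XF}$ is precisely the state $\sigma_{XF}$ of Theorem~\ref{thm:CIT} (obtained by measuring $A$ in the complementary basis $\bar\theta$), so CIT bounds the summand below by $n$. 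Summing $2^n$ copies of this bound and dividing by $2^n$ yields the claimed inequality.

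The argument is essentially routine once the averaging is set up; the only point requiring care is the reindexing, i.e.\@ recognizing that summing $\Hone(\rho^\theta_{XF}|F)$ over uniform $\theta$ equals summing $\Hone(\rho^{\bar\theta}_{XF}|F)$. This is exactly what allows the ``measure $A$ in $\theta$'' term (for the $E$-side) and the ``measure $A$ in $\bar\theta$'' term (for the $F$-side) of CIT to be paired under a single uniform average over $\Theta$, rather than needing two separate and incompatible averages.
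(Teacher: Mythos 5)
Your proposal is correct and follows essentially the same route as the paper's own proof: decompose both conditional entropies via Lemma~\ref{lemma:Average}, reindex the $F$-sum by the bijection $\theta \mapsto \bar\theta$, and apply Theorem~\ref{thm:CIT} term by term. No gaps.
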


\section{Setup and The Task of Position Verification}\label{sec:setup}
\subsection{The Security Model}\label{subsec:model}

We informally describe the model we use for the upcoming sections,
which is a quantum version of the Vanilla (standard) model introduced
in~\cite{CGMO09} (see there for a full description).  We also describe
our {\em restricted model} used for our security proof, that we call
the no pre-shared entanglement (\NPE) model. We consider entities
$V_0,\dotsc, V_k$ called {\em verifiers} and an entity $P$, the
(honest) {\em prover}.  Additionally, we consider a coalition $\dP$ of
{\em dishonest provers} (or {\em adversaries})
\smash{$\dP_0,\ldots,\dP_\ell$}. All entities can perform arbitrary
quantum (and classical) operations and can communicate quantum (and
classical) messages among them.  

For our positive results, we consider a restricted model, which prohibits entanglement between the dishonest verifiers. Specifically, the {\em \NPE model} is such that the dishonest provers enter every new round of communication, initiated by the verifiers, with no pre-shared entanglement. That is, in every round, a dishonest prover can
send an entangled quantum state only {\em after} it receives the verifier's
message, and the dishonest provers cannot maintain such an entangled state in order to use it in the next round. 
As mentioned in the introduction, considering this simple
(but possibly unrealistic) model may help us in obtaining protocols that
are secure against adversaries with {\em limited} entanglement.

For simplicity, we assume that quantum operations and communication
are noise-free; however, our results generalize to the more realistic
noisy case, assuming that the noise is low enough. We require that the
verifiers have a private and authenticated channel among themselves, which
allows them to coordinate their actions by communicating before,
during or after protocol execution. We stress however, that this
assumption does
not hold for the communication between the verifiers and~$P$: $\dP$
has full control over the destination of messages communicated between
the verifiers and~$P$ (both ways). In particular, the
verifiers do not know per-se if they are communicating with the honest
or a dishonest prover (or a coalition of dishonest provers).

The above model is extended by incorporating the notion of {\em time}
and {\em space}. Each entity is assigned an arbitrary fixed position
$\pos$ in the $d$-dimensional space~$\R^d$, and we assume that
messages to be communicated travel at fixed velocity $v$ (e.g.\ with
the speed of light), and hence the time needed for a message to travel
from one entity to another equals the Euclidean distance between the
two (assuming that~$v$ is normalized to~$1$).  This timing assumption holds for honest
and dishonest entities. We assume on the other hand that local
computations take no time.

Finally, we
assume that the verifiers have precise and synchronized clocks, so
that they can coordinate exact times for sending off messages and
can measure the exact time of a message arrival. We do not
require $P$'s clock to be precise or in sync with the verifiers.
However, we do assume that $P$ cannot be reset.

This model allows to reason as follows.
Consider a verifier $V_0$ at position $\pos_0$, who sends
a challenge $\ch_0$ to the (supposedly honest) prover claiming to be
at position $\pos$. If $V_0$ receives a reply within time
$2\d(\pos_0,\pos)$, where $\d(\cdot,\cdot)$ is the Euclidean
distance measure in $\R^d$ and thus also measures the time a message
takes from one point to the other, then $V_0$ can conclude that he
is communicating with a prover that is within distance
$\d(\pos_0,\pos)$.

We stress that in our model, the honest prover $P$ has no advantage over the dishonest provers beyond being at its position $\pos$. In particular, $P$ does not share any secret information with the verifiers, nor can he per-se authenticate his messages by any other means. 

Throughout the article, we
require that the honest prover $P$ is {\em enclosed} by the
verifiers $V_0,\ldots,V_k$ in that the prover's position $\pos \in
\R^d$ lies within the tetrahedron, i.e., convex hull, $\Hull(\pos_0,\dotsc,\pos_k) \subset \R^d$
formed by the respective positions
of the verifiers.
Note that in this work we consider only {\em stand-alone security}, i.e.,
there exists only a single execution with a single honest prover, and we
do not guarantee concurrent security.

\subsection{Secure Position Verification}
A position-verification scheme
should allow a prover $P$ at position $\pos \in \R^d$ (in
$d$-dimensional space) to convince a set of $k+1$ verifiers
$V_0,\ldots,V_k$, who are located at respective positions
$\pos_0,\dotsc,\pos_k \in \R^d$, that he is indeed at position
$\pos$. We assume that $P$ is enclosed by
$V_0,\ldots,V_k$. 
We require that the verifiers jointly accept if an honest prover $P$ is at
position $\pos$, and we require that the verifiers reject with
``high''
probability in case of a dishonest prover
that is not at position $\pos$. The latter should hold even if the
dishonest prover consist of a {\em coalition} of collaborating
dishonest provers $\dP_0,\ldots,\dP_{\ell}$ at arbitrary
positions
$\apos_0,\dotsc,\apos_\ell \in \R^d$ with $\apos_i \neq \pos$ for
all~$i$. We refer to~\cite{CGMO09} for the general formal definition
of the completeness and security of a position-verification scheme. In
this article, we mainly focus on position-verification schemes of the
following form:

\begin{definition}\label{def:1roundSP}
  A 1-round \textbf{position-verification} scheme $\SP=(\SPch,\SPresp,\SPver)$ 
  consists of the following three parts.
  A
  challenge generator $\SPch$, which outputs a list of challenges
  $(\ch_0,\ldots,\ch_k)$ and auxiliary information~$x$; a response
  algorithm $\SPresp$, which on input a list of challenges outputs a
  list of responses $(x'_0,\ldots,x'_k)$; and a verification algorithm
  $\SPver$ with $\SPver(x'_0,\ldots,x'_k,x) \in \set{0,1}$. 
  
  $\SP$ is said to have {\bf perfect completeness} if
  $\SPver(x'_0,\ldots,x'_k,x) = 1$ with probability~1 for
  $(\ch_0,\ldots,\ch_k)$ and $x$ generated by~$\SPch$ and
  $(x'_0,\ldots,x'_k)$ by~$\SPresp$ on input $(\ch_0,\ldots,\ch_k)$.
\end{definition}

The algorithms $\SPch$, $\SPresp$ and $\SPver$ are used as described in Figure~\ref{fig:1roundSP} to verify the claimed position of a prover $P$. We clarify that in order to have all the challenges arrive at $P$'s (claimed) location $\pos$ {\em at the same time}, the verifiers agree on a time $T$ and each $V_i$ sends off his challenge $\ch_i$ at time $T-\d(\pos_i,\pos)$.
As a result, all $\ch_i$'s arrive at
$P$'s position $\pos$ at time $T$. In Step~\ref{step:check}, $V_i$
receives $x'_i$ {\em in time} if $x'_i$ arrives at $V_i$'s position
$\pos_i$ at time $T+\d(\pos_i,\pos)$. Throughout the article, we use
this simplified terminology. Furthermore, we are sometimes a bit
sloppy in distinguishing a party, like $P$, from its location
$\pos$.

\begin{figure}[htb]
\begin{framed}
\small
Common input to the verifiers: their respective positions $\pos_0,\dotsc,\pos_k$, and $P$'s (claimed) position $\pos$.
\begin{enumerate}\setlength{\parskip}{0.1ex}\setcounter{enumi}{-1}
\item $V_0$ generates a list of challenges $(\ch_0,\ldots,\ch_k)$ and auxiliary information $x$ using $\SPch$,
and sends $\ch_i$ to $V_i$ for $i = 1,\ldots,k$.
\item Every $V_i$ sends $\ch_i$ to $P$ in such a way that all $\ch_i$'s
arrive at the same time at $P$'s position~$\pos$.
\item $P$ computes $(x'_0,\ldots,x'_k):= \SPresp(\ch_0,\ldots,\ch_k)$ as soon as all the $\ch_i$'s arrive,  and he sends $x'_i$ to $V_i$ for every $i$.
\item\label{step:check} The $V_i$'s  jointly accept if and only if all $V_i$'s receive  $x'_i$ in time and $\SPver(x'_0,\ldots,x'_k,x) = 1$.
\end{enumerate}
\end{framed}
 \caption{Generic 1-round position-verification scheme. }
 \label{fig:1roundSP}
\end{figure}

We stress that we allow $\SPch$, $\SPresp$ and $\SPver$ to be {\em quantum}
algorithms and $\ch_i$, $x$ and $x'_i$ to be quantum information. In our constructions, only $\ch_0$ will actually be quantum; thus, we will only require quantum
communication from $V_0$ to $P$, all other communication is
classical. Also, in our constructions, $x'_0 = \ldots = x'_k$, and $\SPver(x'_0,\ldots,x'_k,x) = 1$ exactly if $x'_i = x$ for all~$i$.

\begin{definition}
A 1-round position-verification scheme $\SP = (\SPch,\SPresp,\SPver)$ is
called {\bf $\eps$-sound} if for any position $\pos \in
\Hull(\pos_0,\dotsc,\pos_k)$, and any coalition of dishonest provers
\smash{$\dP_0,\ldots,\dP_{\ell}$} at arbitrary positions
$\apos_0,\dotsc,\apos_\ell$, all $\neq \pos$, when executing the
scheme from Figure~\ref{fig:1roundSP} the verifiers accept with
probability at most $\eps$. We write $\SP^\eps$ for such a
protocol.
\end{definition}

In order to be more realistic, we must take into consideration physical limitations of the equipment used,
such as measurement errors, computation durations, etc. Those allow a dishonest prover which
resides arbitrarily close to $P$ to appear as if she resides at $\pos$.
Thus, we assume that all the adversaries are at least $\Delta$-distanced from $\pos$, where $\Delta$ is determined
by those imperfections. For sake of simplicity, this $\Delta$ is implicit in the continuation of the paper.

A position-verification scheme can also be understood as a
(position-based) {\em identification} scheme, where the
identification is not done by means of a cryptographic key or a
password, but by means of the geographical location.

\section{Instantaneous Nonlocal Quantum Computation}\label{sec:nonlocal}

In order to analyze the (in)security of position-verification schemes,
we first address a more general task, which is interesting in its own
right: {\em instantaneous nonlocal quantum computation}\footnote{This is an extension of the task of ``instantaneous measurement of
  nonlocal variables'' introduced by Vaidman~\cite{Vaidman03}.}.
Consider the following problem, involving two parties Alice and
Bob. Alice holds $A$ and Bob holds $B$ of a tripartite system $ABE$
that is in some unknown state $\ket{\psi}$.  The goal is to apply a
known unitary transformation $U$ to $AB$, but {\em without} using any
communication, just by local operations. In general, such a task is
clearly impossible, as it violates the non-signalling principle. The
goal of instantaneous nonlocal quantum computation is to achieve
almost the above but without violating non-signalling. Specifically,
the goal is for Alice and Bob to compute, without communication, a state
$\ket{\varphi'}$ that coincides with $\ket{\varphi} = (U \otimes \I)
\ket{\psi}$ up to {\em local} and {\em qubit-wise} operations on $A$ and $B$,
where $\I$ denotes the identity on $E$. Furthermore, these local and
qubit-wise operations are determined by {\em classical} information that
Alice and Bob obtain as part of their actions. In particular, if Alice
and Bob share their classical information, which can be done with {\em one}
round of simultaneous mutual communication, then they can transform
$\ket{\varphi'}$ into $\ket{\varphi} = U \ket{\psi}$ by local
qubit-wise operations. Following ideas by Vaidman~\cite{Vaidman03}, we
show below that instantaneous nonlocal quantum computation, as
described above, is possible if Alice and Bob share sufficiently many
EPR pairs.

In the following, let $\H_A$, $\H_B$ and $\H_E$ be Hilbert spaces where the former two 
consist of $n_A$ and $n_B$ qubits respectively, i.e.,   
$\H_A = (\C^2)^{\otimes n_A}$ and $\H_B = (\C^2)^{\otimes n_B}$. Furthermore, let $U$ be a unitary matrix acting on $\H_A \otimes \H_B$. Alice holds system $A$ and Bob holds system~$B$ of an arbitrary and unknown state $\ket{\psi} \in \H_{ABE} = \H_A \otimes \H_B \otimes \H_E$. Additionally, Alice and Bob share an arbitrary but finite number of EPR pairs. 

\begin{theorem}\label{thm:local}
  For every unitary~$U$ and for every $\eps > 0$, given sufficiently
  many shared EPR pairs, there exist local operations ${\cal A}$~and~${\cal B}$, 
  acting on Alice's and Bob's respective sides, with the
  following property. For any initial state $\ket{\psi} \in \H_{ABE}$, the joint execution ${\cal
    A} \otimes {\cal B}$ transforms~$\ket{\psi}$ into~$\ket{\varphi'}$
  and provides classical outputs~$k$ to Alice and~$\ell$ to Bob,
  such that the following holds except with probability~$\eps$. 
  The state $\ket{\varphi'}$ coincides with $\ket{\varphi} = (U \otimes \I)
  \ket{\psi}$ up to local qubit-wise operations on $A$~and~$B$ that are determined 
  by~$k$ and~$\ell$.
\end{theorem}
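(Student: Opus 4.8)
The plan is to follow Vaidman's back-and-forth teleportation idea~\cite{Vaidman03}, deferring \emph{all} classical communication to a single simultaneous exchange at the very end, so that during the protocol itself Alice and Bob act purely locally. First I would reduce to the two-party picture in which $E$ is a passive bystander carried along by $\I$: it suffices to realise $(U\otimes\I)$ on $ABE$, and since every operation below touches only $A$, $B$ and freshly consumed EPR halves, the register $E$ is never acted upon. The one primitive I rely on is teleportation as recalled in the preliminaries: teleporting an $m$-qubit block consumes $m$ EPR pairs and delivers the block to the other party up to a Pauli $\PC_k$ that is known to the \emph{sender} (through the outcome $k$) but not to the receiver.

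The basic building block is one ``round''. Bob teleports his block $B$ to Alice, obtaining a classical outcome $\beta$; Alice now physically holds all of $AB$, but in the state $(\id\otimes\PC_\beta)\ket\psi$ with $\beta$ unknown to her. Alice applies the target $U$ and teleports the $B$-qubits back to Bob, obtaining her own outcome $\alpha$. Writing $W_\beta := U(\id\otimes \PC_\beta)U^\dagger$, the resulting state is
\[
(\id\otimes \PC_\alpha)\,W_\beta\,(U\otimes\I)\ket\psi ,
\]
i.e.\ exactly the target $\ket\varphi=(U\otimes\I)\ket\psi$ up to (i) the \emph{local} Pauli $\PC_\alpha$ on $B$, which Alice records and which is harmless since it is qubit-wise and local, and (ii) a \emph{residual} unitary $W_\beta$ that is in general nonlocal and is known only to Bob (via $\beta$). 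The whole difficulty is concentrated in (ii): if $U$ were Clifford then $W_\beta$ would again be a Pauli and we would already be done, so the genuine obstacle is a non-Clifford $U$, for which $W_\beta$ is a true entangling correction.

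The engine of the proof is then to cancel $W_\beta$ by the same device but with the roles of applier reversed, since it is Bob who knows $W_\beta$: Alice teleports $A$ across, Bob applies $W_\beta^\dagger$ and teleports back, removing $W_\beta$ at the cost of a fresh local Pauli and a \emph{new} residual that is a Pauli-conjugate of $W_\beta$. Iterating, I would track the family of residual unitaries that can occur and observe that, at every round, the event that the fresh teleportation outcome is the identity Pauli (which has fixed positive probability) collapses the current residual and completes that line of the computation. Since one cannot test for success without communicating, the actual realisation is the parallel, branch-over-all-outcomes version that keeps every possible correction alive simultaneously, so that some branch always carries the correctly computed state; feeding in more EPR pairs lets this branching run to greater depth, and truncating at a depth that drives the ``not-yet-collapsed'' probability below $\eps$ yields the claimed failure bound. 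At termination all surviving corrections are local qubit-wise Paulis, and collecting Alice's outcomes into $k$ and Bob's into $\ell$ specifies precisely the local corrections of the statement.

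I expect the main obstacle to be the bookkeeping and termination analysis of this third step: showing that the residual corrections stay within a controlled (finitely describable, up to the $\eps$-approximation) set, that the branching genuinely reduces to local Paulis except with probability $\eps$, and bounding the number of EPR pairs this costs---this is where the (double-)exponential blow-up enters. A secondary point that must be verified is strict non-signalling: every local operation a party performs may depend only on that party's own past measurement outcomes and on the publicly known $U$, never on the other party's outcomes, so that the only communication is the single final simultaneous exchange of $k$ and $\ell$---which is exactly what upgrades the no-communication statement into the one-round corollary.
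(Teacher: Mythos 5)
Your proposal is correct and follows essentially the same route as the paper: Vaidman-style back-and-forth teleportation in which the non-Clifford residual $U\PC U^\dagger$ is cancelled by iterating, with all branches over the other party's unknown teleportation outcomes kept alive in parallel on separate blocks of EPR pairs, success occurring with constant probability per round, and truncation at a depth driving the failure probability below $\eps$. The paper formalizes your ``branch over all outcomes'' step by proving the classical-input-dependent version (Theorem~\ref{thm:local+}), where the accumulated outcomes become the inputs $x'$ and $y'$ selecting both the labeled teleportation channel and the updated unitary $U'_{x',y'}$ for the next iteration, but this is exactly the bookkeeping you anticipate.
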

We stress that ${\cal A}$ acts on $A$ as well as on Alice's shares of
the EPR pairs, and the corresponding holds for ${\cal
  B}$. Furthermore, being equal up to local qubit-wise operations on
$A$ and $B$ means that $\ket{\varphi} = (V^A_{k,\ell} \otimes
V^B_{k,\ell} \otimes \I) \ket{\varphi'}$, where
$\{V^A_{k,\ell}\}_{k,\ell}$ and $\{V^B_{k,\ell}\}_{k,\ell}$ are fixed
families of unitaries which act qubit-wise on $\H_A$ and $\H_B$,
respectively. In our construction, the $V^A_{k,\ell}$ and
$V^B_{k,\ell}$'s will actually be tensor products of one-qubit Pauli operators.

As an immediate consequence of Theorem~\ref{thm:local}, we get the following. 
\begin{corollary}
  For every unitary $U$ and for every $\eps > 0$, given sufficiently
  many shared EPR pairs, there exists a nonlocal operation $\cal AB$ for Alice and Bob which consists of local operations and {\em one} round of mutual communication, such that for any initial state $\ket{\psi} \in \H_{ABE}$ of the tripartite system $ABE$, the joint execution of $\cal AB$ transforms $\ket{\psi}$ into $\ket{\varphi} = (U \otimes \I)
  \ket{\psi}$, except with probability~$\eps$.
\end{corollary}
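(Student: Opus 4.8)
The plan is to derive the corollary directly from Theorem~\ref{thm:local} by appending a single round of simultaneous classical communication to the purely local protocol $\mathcal{A} \otimes \mathcal{B}$. First I would invoke the theorem: for the given $U$ and $\eps$, fix local operations $\mathcal{A}$ and $\mathcal{B}$ as supplied by Theorem~\ref{thm:local}, and let the nonlocal operation $\mathcal{AB}$ begin by having Alice and Bob run $\mathcal{A}$ and $\mathcal{B}$ on their respective systems (including their halves of the shared EPR pairs). This produces a state $\ket{\varphi'}$ together with classical outcomes $k$ held by Alice and $\ell$ held by Bob, and, except with probability~$\eps$, we have $\ket{\varphi} = (V^A_{k,\ell} \otimes V^B_{k,\ell} \otimes \I)\ket{\varphi'}$ with $\ket{\varphi} = (U \otimes \I)\ket{\psi}$, where $V^A_{k,\ell}$ and $V^B_{k,\ell}$ are the fixed qubit-wise correction unitaries (tensor products of one-qubit Paulis) from the discussion following the theorem.

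The crucial observation is that Alice's correction $V^A_{k,\ell}$ depends not only on her own outcome $k$ but also on Bob's outcome $\ell$, and symmetrically for Bob; hence neither party can complete its correction on its own. This is exactly where the single round of communication enters: Alice sends $k$ to Bob and Bob sends $\ell$ to Alice, both messages dispatched \emph{simultaneously}. This is legitimate as a single mutual round precisely because $k$ and $\ell$ are each already determined by the local operations $\mathcal{A}$ and $\mathcal{B}$, so neither outgoing message depends on the incoming one. After the exchange, both parties hold the full pair $(k,\ell)$.

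Finally, Alice applies $V^A_{k,\ell}$ to $A$ and Bob applies $V^B_{k,\ell}$ to $B$; these are local qubit-wise operations. The joint state becomes $(V^A_{k,\ell} \otimes V^B_{k,\ell} \otimes \I)\ket{\varphi'} = \ket{\varphi} = (U \otimes \I)\ket{\psi}$, which is the claimed output, and the only error is the event of probability at most~$\eps$ inherited from Theorem~\ref{thm:local}. The complete procedure $\mathcal{AB}$ therefore consists of local operations ($\mathcal{A}$, $\mathcal{B}$, and the two corrections) together with exactly one round of mutual communication, as required.

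Since the corollary is advertised as an immediate consequence, I do not anticipate a genuine obstacle here: all the heavy lifting---constructing $\mathcal{A}$ and $\mathcal{B}$ by iterated teleportation and bounding the number of EPR pairs against~$\eps$---resides in Theorem~\ref{thm:local}. The one point worth stating carefully is that a single simultaneous round is both \emph{necessary} (each correction depends on both classical outcomes) and \emph{sufficient} (each message is fixed before any communication occurs), so that no second, adaptively chosen round is needed.
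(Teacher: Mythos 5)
Your proposal is correct and is exactly the argument the paper intends: run the local operations $\mathcal{A}\otimes\mathcal{B}$ from Theorem~\ref{thm:local}, exchange $k$ and $\ell$ in one simultaneous round (each message being determined before any communication), and then apply the qubit-wise corrections $V^A_{k,\ell}$ and $V^B_{k,\ell}$. The paper treats this as immediate and spells out the same reasoning in the discussion surrounding the theorem, so there is nothing further to add.
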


For technical reasons, we will actually prove the following extension
of Theorem~\ref{thm:local}, which is easily seen equivalent. The
difference to Theorem~\ref{thm:local} is that Alice and Bob are
additionally given classical inputs: $x$ to Alice and $y$ to Bob, and
the unitary $U$ that is to be applied to the quantum input depends on
$x$ and $y$. In the statement below, $x$ ranges over some arbitrary
but fixed finite set $\cal X$, and $y$ ranges over some arbitrary but
fixed finite set~$\cal Y$.

\begin{theorem}\label{thm:local+}
For every family $\set{U_{x,y}}$ of unitaries and for every $\eps > 0$, given sufficiently many shared EPR pairs, there exist families $\{{\cal A}_x\}$ and $\{{\cal B}_y\}$ of local operations, acting on Alice's and Bob's respective sides, with the following property. For any initial state $\ket{\psi} \in \H_{ABE}$ and for every $x \in \cal X$ and $y \in \cal Y$, the joint execution ${\cal A}_x \otimes {\cal B}_y$ transforms the state $\ket{\psi}$ into $\ket{\varphi'}$ and provides classical outputs $k$ to Alice and $\ell$ to Bob, such that the following holds except with probability $\eps$. The state $\ket{\varphi'}$ coincides with $\ket{\varphi} = (U_{x,y} \otimes \I) \ket{\psi}$ up to local qubit-wise operations on $A$ and $B$ that are determined by $k$ and $\ell$.
\end{theorem}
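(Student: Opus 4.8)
The plan is to realize the family $\{U_{x,y}\}$ by the Vaidman-style device of teleporting subsystems back and forth, letting Alice's operation $\mathcal{A}_x$ and Bob's operation $\mathcal{B}_y$ depend on their classical inputs only through the choice of which unitary from the family is being implemented. Theorem~\ref{thm:local} is the special case $\mathcal X = \mathcal Y = \{\ast\}$, and conversely the classical inputs can be folded into the description of the local operations, so the two statements are equivalent; I would nonetheless carry out the argument for the family version, because the classical-input generality is precisely what makes the construction self-reducible. As a warm-up and sanity check I would first note that for \emph{Clifford} $U$ the task is almost immediate: Bob teleports $B$ to Alice, Alice applies $U$ to the full $n = n_A + n_B$ qubit state, and teleports the output $B$-part back; since conjugating a Pauli by a Clifford is again a Pauli, every leftover teleportation correction is a qubit-wise Pauli that can be recorded in the classical outputs $k,\ell$. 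The real content is general, non-Clifford $U$.

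The engine of the proof is the analysis of a single ``ping-pong''. Bob teleports $B$ to Alice with Bell outcome $b$ (known only to Bob), Alice applies $U_{x,y}$ to all $n$ qubits, and then teleports the output $B$-part back with outcome $a$ (known only to Alice). Using that the Pauli corrections $\PC$ are Hermitian, a direct computation shows the resulting global state is $(\I_A \otimes \PC_a)\,U_{x,y}\,(\I_A \otimes \PC_b)\ket{\psi}$, with Alice holding the output $A$-part and Bob the output $B$-part. The outer factor $\PC_a$ acts on Bob's qubits but is \emph{known to Alice}, so it is an admissible qubit-wise correction to be absorbed into $V^B_{k,\ell}$; the obstruction is the inner factor $\PC_b$ sandwiched \emph{before} $U_{x,y}$. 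Removing it is equivalent to applying the residual unitary $U_{x,y}\,\PC_b\,U_{x,y}^{\dagger}$ to the current bipartite state, i.e.\ a fresh instance of exactly the same task, for a unitary drawn from a finite, known family indexed by the classical outcome $b$. This is why Theorem~\ref{thm:local+} is stated for families with classical inputs: the reduction produces precisely such an instance, enabling the scheme to be iterated.

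To drive the failure probability below $\eps$ I would exploit that whenever a teleportation outcome is \emph{trivial} ($\PC_b = \id$) the residual unitary collapses to the identity, and once the residual has become a qubit-wise Pauli it stays an admissible correction under all further ping-pongs. Since each teleportation is trivial independently with probability $4^{-n}$, running the scheme for a fixed number $T$ of rounds fails to reach a Pauli residual with probability at most $(1-4^{-n})^{T}$, which is at most $\eps$ once $T \gtrsim 4^{n}\ln(1/\eps)$. All teleportation outcomes accumulated over the $T$ rounds constitute the classical outputs $k$ (Alice) and $\ell$ (Bob), and from these both parties can compute the final qubit-wise Pauli corrections $V^A_{k,\ell}$ and $V^B_{k,\ell}$ certifying that $\ket{\varphi'}$ equals $(U_{x,y}\otimes\I)\ket{\psi}$ up to local operations, as required.

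The main obstacle, and the reason this requires genuine machinery rather than a literal recursion, is that the party who must apply the residual $U_{x,y}\,\PC_b\,U_{x,y}^{\dagger}$ does \emph{not} know $b$, since it is the other party's secret teleportation outcome, and the harmless outer correction $\PC_a$ does not commute past the residual. My plan to resolve this is Vaidman's idea of running all branches \emph{in parallel}: on disjoint blocks of fresh EPR pairs, the applying party prepares the appropriately-corrected computation for every possible correction-history, teleporting into and out of the branch selected by the as-yet-unknown outcomes, so that the single correct branch lines up automatically without anyone having to know $b$ in advance. Because there are $4^{n}$ possible corrections per level and $T$ levels, this inflates the number of shared EPR pairs to a double exponential in $n$ --- the quantitative cost flagged in the introduction and later reduced to a single exponential via port-based teleportation. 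The technical heart of the argument is then the bookkeeping: verifying that the parallel branches compose correctly, that only the intended branch survives, and that the recorded corrections are consistent across rounds; the probabilistic termination bound above serves as the clean top-level wrapper around this construction.
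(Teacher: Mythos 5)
Your proposal is correct and follows essentially the same route as the paper's proof: the ping-pong teleportation producing the residual unitary $U_{x,y}\PC_b U_{x,y}^\dagger$, the self-reduction to a fresh instance of the family version with enlarged classical inputs, the resolution of the ``unknown correction'' problem by having one party apply all candidate unitaries in parallel on separate teleportation channels while the other selects the branch via which channel she teleports into, and the geometric $4^{-n}$-per-round success analysis with the resulting double-exponential entanglement cost. The only differences are cosmetic (e.g., which party holds the state after the initial consolidation step).
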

The solution works by teleporting states back and forth in a
clever way~\cite{Vaidman03}, but {\em without} communicating the
classical outcomes of the Bell measurements, so that only local
operations are performed. Thus, in the formal proof below, whenever we
say that a state is teleported, it should be understood in this
sense, i.e., the sender makes a Bell measurement resulting in some
classical information, and the receiver takes his shares of the EPR
pairs as the received state, but does/can not (yet) correct it.

\begin{proof}
To simplify notation, we assume that the joint state of $A$ and $B$ is pure, and thus we may ignore system $E$. However, all our arguments also hold in case the state of $A$ and $B$ is entangled with $E$. 

Next, we observe that it is sufficient to prove Theorem~\ref{thm:local+} for the case where $B$ is ``empty'', i.e., $\dim \H_B = 1$ and thus $n_B = 0$. Indeed, if this is not the case, Alice and Bob can do the following. Bob first teleports $B$ to Alice. Now, Alice holds $A'=AB$ with $n_{A'}=n_A+n_B$, and Bob's system has collapsed and thus Bob holds no quantum state anymore, only classical information. Then, they do the nonlocal computation, and in the end Alice teleports $B$ back to Bob. The modification to the state of $B$ introduced by teleporting it to Alice can be taken care of by modifying the set of unitaries $\{U_{x,y}\}$ accordingly (and making it dependent on Bob's measurement outcome, thereby extending the set $\cal Y$). Also, the modification to the state of $B$ introduced by teleporting it back to Bob does not harm the requirement of the joint state being equal to $\ket{\varphi} = U_{x,y} \ket{\psi}$ up to local qubit-wise operations. 

Hence, from now on, we may assume that $B$ is ``empty'', and we write $n$ for~$n_A$. 
Next, we describe the core of how the local operations ${\cal A}_x$ and ${\cal B}_y$ work. 
To simplify notation, we assume that ${\cal X} = \set{1,\ldots,m}$. 
Recall that Alice and Bob share (many) EPR pairs. We may assume that the EPR pairs are grouped into groups of size $n$; each such group we call a {\em teleportation channel}. Furthermore, we may assume that $m$ of these teleportation channels are labeled by the numbers $1$ up to $m$, and that another $m$ of these teleportation channels are labeled by the numbers $m+1$ up to $2m$. 

\begin{enumerate}
\item
  Alice teleports $\ket{\psi}$ to Bob, using the teleportation channel that is labeled by her input $x$.
  Let us denote her
  measurement outcome by $k_\circ \in \set{0,1,2,3}^{n}$.

\item 
For every $i \in \set{1,\ldots,m}$, Bob does the following. He applies the unitary $U_{i,y}$ to the $n$ qubits that make up his share of the EPR pairs given by the teleportation channel labeled by $i$. Then, he teleports the resulting state to Alice using the teleportation channel labeled by $m+i$. We denote the corresponding measurement outcome by $\ell_{\circ,i}$. 

\item 
Alice specifies the $n$ qubits that make up her share of the EPR pairs given by the teleportation channel labeled by $m+x$ to be the state $\ket{\varphi'}$. 
\end{enumerate}

Let us analyze the above. With probability $1/4^n$, namely if $k_\circ = 0\cdots0$, teleporting $\ket{\psi}$ to Bob leaves the state unchanged. In this case, it is easy to see that the resulting state $\ket{\varphi'}$ satisfies the required property of being identical to $\ket{\varphi} = U_{x,y} \ket{\psi}$ up to local qubit-wise operations determined by $\ell_{\circ,x}$, and thus determined by $x$ and
$\ell_\circ = (\ell_{\circ,1},\ldots,\ell_{\circ,m})$.
This proves the claim for the case where $\eps \geq 1 - 1/4^{n}$. 

We show how to reduce $\eps$. The crucial observation is that if in the above procedure $k_\circ \neq 0\cdots0$, and thus $\ket{\varphi'}$ is not necessarily identical to $\ket{\varphi}$ up to local qubit-wise operations, then 
$$
\ket{\varphi'} = V_{\ell_{\circ,x}} U_{x,y} V_{k_\circ}\ket{\psi} = V_{\ell_{\circ,x}} U_{x,y} V_{k_\circ} U_{x,y}^\dagger \ket{\varphi} \, ,
$$
where $V_{\ell_{\circ,x}}$ and $V_{k_\circ}$ are tensor products of Pauli matrices. 
Thus, setting $\ket{\psi'}:= \ket{\varphi'}$, $x' := (x,k_\circ)$ and $y' := (y,\ell_\circ)$, and 
$U'_{x',y'} := U_{x,y} V_{k_\circ} U_{x,y}^\dagger V_{\ell_{\circ,x}}$, the state~$\ket{\varphi}$ can be written as $\ket{\varphi} = U'_{x',y'} \ket{\psi'}$. 
This means, we are back to the original problem of applying a unitary, $U'_{x',y'}$, to a state, $\ket{\psi'}$, held by Alice, where the unitary depends on classical information $x'$ and $y'$, known by Alice and Bob, respectively. Thus, we can re-apply the above procedure to the new problem instance. Note that in the new problem instance, the classical inputs $x'$ and $y'$ come from larger sets than the original inputs $x$ and $y$, but the new quantum input, $\ket{\psi'}$, has the same qubit size, $n$. Therefore, re-applying the procedure will succeed with the same probability~$1/4^n$. 

As there is a constant probability of success in each round, re-applying the above procedure sufficiently many times to the resulting new problem instances guarantees that except with arbitrary small probability, the state $\ket{\varphi'}$ will be of the required form at some point
(when Alice gets $k_\circ=0\cdots0$). 
Say, this is the case at the end of the $j$-th iteration. Then, Alice stops with her part of the procedure at this point, keeps the state $\ket{\varphi'}$, and specifies $k$ to consist of $j$ and of her classical input into the $j$-th iteration (which consists of $x$ and of the $k_\circ$'s from the prior $j-1$ iterations). 
Since Bob does not learn whether an iteration is successful or not, he has to keep on re-iterating up to some bound, and in the end he specifies $\ell$ to consist of the $\ell_\circ$'s collected over all the iterations. 
The state~$\ket{\varphi'}$ equals $\ket{\varphi} = U_{x,y} \ket{\psi}$ up to local qubit-wise operations that are determined by $k$~and~$\ell$.  
\end{proof}

Doing the maths shows that the number of EPR pairs needed by Alice and
Bob in the scheme described in the proof is double exponential in
$n_A+n_B$, the qubit size of the joint quantum system. 

In recent subsequent work~\cite{BK11}, Beigi and K\"onig have used a
different kind of quantum teleportation by Ishizaka and
Hiroshima~\cite{IH08,IH09} to reduce the amount of entanglement needed
to to perform instantaneous nonlocal quantum computation to
exponential in the qubit size of the joint quantum system. It remains
an interesting open question whether such an exponentially large
amount of entanglement is necessary.

In Appendix~\ref{sec:nparties}, we explain how to perform
instantaneous nonlocal quantum computation among more than two
parties.

\section{Impossibility of Unconditional Position Verification}\label{sec:Impossibility}

In this section we show that  no position-verification scheme is secure against
a coalition of quantum adversaries in the Vanilla model.
For simplicity, we consider the one-dimensional case, with two
verifiers $V_0$ and~$V_1$, but the attack can be generalized to higher
dimensions and more verifiers.

We consider an arbitrary position-verification scheme in our model (as
specified in Section~\ref{subsec:model}). We recall that in this
model, the verifiers must base their decision solely on {\em what} the
prover replies and {\em how long} it takes him to reply, and the
honest prover has no advantage over a coalition of dishonest provers
beyond being at the claimed position\footnote{In particular, the
  prover does not share any secret information with the verifiers,
  differentiating our setting from models as described for example in
  \cite{Kent10}.}.  Such a position-verification scheme may be of the
form as specified in Figure~\ref{fig:1roundSP}, but may also be made up
of several, possibly interleaved, rounds of interaction between the
prover and the verifiers.

For the honest prover $P$, such a general scheme consists of steps
that look as follows. 
$P$ holds a local quantum register $R$, which is set to some default value at the beginning of the scheme. 
In each step, $P$ obtains a system $A$ from $V_0$ and a system $B$ from $V_1$, and $V_0$ and $V_1$ jointly keep some system $E$. Let $\ket{\psi}$ be the state of the four-partite system $ABRE$; it is determined by the scheme and by the step within the scheme we are focussing on. 
$P$ has to apply a
fixed%
\footnote{$U$ is fixed for a fixed scheme and for a fixed step within the scheme, but of course may vary for different schemes and for different steps within a scheme. }
known unitary transformation $U$ to $ABR$, and send the
(transformed) systems $A$ and $B$ back to $V_0$ and $V_1$ (and keep
$R$). Note that after the transformation, the state of $ABRE$ is given
by $\ket{\varphi} = (U \otimes \I) \ket{\psi}$, where $\I$ is the
identity acting on $\H_E$.  For technical reasons, as in
Section~\ref{sec:nonlocal}, it will be convenient to distinguish
between classical and quantum inputs, and therefore, we
let the unitary $U$ depend on classical information $x$ and $y$,
where $x$ has been sent by $V_0$ along with $A$, and $y$ has been sent
by $V_1$ along with $B$.

We show that a coalition of two dishonest provers $\dP_0$ and
$\dP_1$, where $\dP_0$ is located in between $V_0$ and $P$ and $\dP_1$
is located in between $V_1$ and $P$, can perfectly simulate the
actions of the honest prover $P$, and therefore it is impossible for
the verifiers to distinguish between an honest prover at position
$\pos$ and a coalition of dishonest provers at positions different
from $\pos$. The simulation of the dishonest provers perfectly imitates
the {\em computation} as well as the {\em timing} of an honest $P$. Since
in our model this information is what the verifiers have to base their
decision on, the general impossibility of
position-verification in our model follows.

Consider a step in the scheme as described above, but now from the
point of view of $\dP_0$ and $\dP_1$. Since $\dP_0$ is closer to
$V_0$, he will first receive $A$ and $x$; similarly, $\dP_1$ will
first receive $B$ and $y$. We specify that $\dP_1$ takes care of and
maintains the local register $R$. If the step we consider is the
{\em first} step in the scheme, the state of $ABRE$ equals
$\ket{\psi}$, as in the case of an honest $P$. In order to have an
invariant that holds for all the steps, we actually relax this
statement and merely observe that the state of $ABRE$, say
$\ket{\psi'}$, equals $\ket{\psi}$ up to local and qubit-wise
operations on the subsystem $R$, determined by classical information
$x_\circ$ and $y_\circ$, where $\dP_0$ holds $x_\circ$ and $\dP_1$
holds $y_\circ$. This invariant clearly holds for the first step in the scheme,
when $R$ is in some default state, and we will show that it also holds
for the other steps.

By Theorem~\ref{thm:local+}, it follows that without communication,
just by instantaneous local operations, $\dP_0$ and $\dP_1$ can
transform the state $\ket{\psi'}$ into a state $\ket{\varphi'}$ that
coincides with $\ket{\varphi} = (U_{x,y} \otimes \I) \ket{\psi}$ up to
local and qubit-wise transformations on $A$, $B$ and $R$, determined
by classical information $k$ (known to $\dP_0$) and $\ell$ (known to
$\dP_1$). Note that the initial state is not $\ket{\psi}$, but rather
a state of the form $\ket{\psi'} = (V_{x_\circ,y_\circ} \otimes
\I)\ket{\psi}$, where $x_\circ$ is known to $\dP_0$ and $y_\circ$ to
$\dP_1$. Thus, Theorem~\ref{thm:local+} is actually applied to the
unitary $U'_{x',y'} = U_{x,y} V_{x_\circ,y_\circ}^\dagger$, where $x'
= (x_\circ,x)$ and $y' = (y_\circ,y)$.  Given $\ket{\varphi'}$ and $k$
and $\ell$, $\dP_0$ and $\dP_1$ can exchange $k$ and $\ell$ using {\em one}
mutual round of communication and transform
$\ket{\varphi'}$ into $\ket{\varphi''}$ that coincides with
$\ket{\varphi}$ up to qubit-wise operations only on $R$, and send $A$
to $V_0$ and $B$ to $V_1$. It follows that the state of $ABE$ and the
time it took $\dP_0$ and $\dP_1$ for the computation and communication
is identical to that of an honest $P$, i.e., $\dP_0$ and $\dP_1$ have
perfectly simulated this step of the scheme. 

Finally, we see that the invariant is satisfied, when moving on to the
next step in the scheme, where $\dP_0$ and $\dP_1$ receive new $A$ and
$B$ (along with new classical $x$ and $y$) from $V_0$ and $V_1$,
respectively. Even if this new round interleaves with the previous
round in that the new $A$ and $B$ etc.\@ arrive {\em before}
$\dP_0$ and $\dP_1$ have finished exchanging (the old) $k$ and $\ell$,
it still holds that the state of $ABRE$ is as in the case of honest
$P$ up to qubit-wise operations on the subsystem $R$. It follows
that the above procedure works for all the steps and thus that $\dP_0$
and $\dP_1$ can indeed perfectly simulate honest $P$'s actions
throughout the whole scheme.

\section{Secure Position-Verification in the \NPE model}\label{sec:basicSP}

\subsection{Basic Scheme and its Analysis}

In this section we show the possibility of secure position-verification in the \NPE model.
We consider the
following basic 1-round position-verification scheme, given in Figure~\ref{fig:BasicSP}. It is based on the BB84
encoding. 

\begin{figure}[htb]
\small
\begin{protocol}
\begin{enumerate}\setlength{\parskip}{0.1ex}\setcounter{enumi}{-1}
\item $V_0$ chooses two random bits $x,\theta \in \set{0,1}$ and privately sends them to~$V_1$.
\item $V_0$ prepares the qubit $H^{\theta}\ket{x}$ and sends it to $P$, and $V_1$ sends the bit $\theta$ to $P$, so that $H^{\theta}\ket{x}$ and $\theta$ arrive at the same time at $P$.
\item When $H^{\theta}\ket{x}$ and $\theta$ arrive, $P$ measures $H^{\theta}\ket{x}$ in basis $\theta$ to observe $x' \in \set{0,1}$, and sends $x'$ to $V_0$ and $V_1$.
\item $V_0$ and $V_1$ accept if on both sides $x'$ arrives in time and $x' = x$.
\end{enumerate}
\end{protocol}
 \caption{\!\mbox{Position-verification scheme $\basicSP^\eps$ based on the BB84 encoding.}}
 \label{fig:BasicSP}
\end{figure}

We implicitly specify that parties abort if they receive any
message that is inconsistent with the protocol, for instance
(classical) messages with a wrong length, or different number of
received qubits than expected, etc.

\begin{theorem}\label{thm:BasicSP}
  The 1-round position-verification scheme
  $\basicSP^\eps$ from Figure~\ref{fig:BasicSP} is $\eps$-sound with
  $\eps = 1- \h^{-1}(\frac12)$, in the \NPE model.
\end{theorem}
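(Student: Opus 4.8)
The plan is to reduce an arbitrary adversarial strategy in the \NPE model to a configuration covered by Corollary~\ref{cor:CIT}, and then to bound each prover's guessing error with the Fano inequality (Theorem~\ref{thm:Fano}). I would begin with the standard EPR/deferred-measurement reformulation of the preparation step: instead of having $V_0$ sample $x,\theta$ and transmit $H^\theta\ket{x}$, let $V_0$ prepare an EPR pair, send one half to the prover, sample a uniform basis $\Theta$, and only later measure its own half in basis $\Theta$ to obtain $X$. This gives everyone an identical view and an identical winning condition (the provers must output $X$), while turning the transmitted qubit into one half of a pure bipartite state, which is exactly the object on which the entropic tools act.

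Next comes the timing analysis in the one-dimensional geometry, with $\dP_0$ between $V_0$ and $P$ and $\dP_1$ between $V_1$ and $P$. The qubit reaches $\dP_0$ strictly before $\theta$ could reach $\dP_0$ (the latter can only arrive relayed through $\dP_1$), and symmetrically $\theta$ reaches $\dP_1$ strictly before any qubit-derived message relayed through $\dP_0$. Consequently, in the \NPE model (no pre-shared entanglement, a single simultaneous round of communication) every strategy has the following form: $\dP_0$, not yet knowing $\theta$, applies an isometry to the received qubit, splitting it into a system $E$ that he forwards to $\dP_1$ and a system $F$ he keeps; simultaneously $\dP_1$ forwards to $\dP_0$ a message that is a function of $\theta$ only. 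Afterwards $\dP_0$ must produce his reply to $V_0$ from $F$ together with that message, and $\dP_1$ must produce his reply to $V_1$ from $E$ together with $\theta$. Since granting a guesser strictly more information can only lower its error, I may generously give the full value $\Theta$ to both provers, so that it suffices to bound the error of the best guess of $X$ from $(F,\Theta)$ and from $(E,\Theta)$.

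For the finish, observe that the global state $\ket{\Psi_{AEF}}$, with $A$ the single qubit retained by $V_0$, is pure and has $n=1$. Corollary~\ref{cor:CIT} then yields $\Hone(X|\Theta E)+\Hone(X|\Theta F)\ge 1$, so at least one of the two terms is at least $\frac12$. Writing $q_0$ and $q_1$ for the error probabilities of the guesses made by $\dP_0$ and $\dP_1$, the binary case of Fano (Theorem~\ref{thm:Fano}) gives $q_0\ge \h^{-1}(\Hone(X|\Theta F))$ and $q_1\ge \h^{-1}(\Hone(X|\Theta E))$. Since $\h^{-1}$ is increasing, the prover associated with the larger entropy errs with probability at least $\h^{-1}(\frac12)$. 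The verifiers accept only if both replies equal $X$, so the success probability is at most $1-\max(q_0,q_1)\le 1-\h^{-1}(\frac12)=\eps$, as claimed.

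I expect the main obstacle to be making the structural decomposition of the second paragraph fully rigorous: namely that the \NPE timing model genuinely forces each prover's final answer to depend only on its own subsystem together with (at most) $\Theta$. This entails arguing carefully that a single simultaneous round is all the communication that can causally influence the answers, that the two crossing messages are functions of the respective local inputs alone (the qubit for $\dP_0$, the basis $\theta$ for $\dP_1$), and that a fully general, possibly quantum and randomized, operation by $\dP_0$ can be dilated without loss of generality to the clean isometric split $A\to EF$ while keeping $\ket{\Psi_{AEF}}$ pure so that Corollary~\ref{cor:CIT} applies. Once this reduction is in place, the combination of CIT and Fano is essentially mechanical.
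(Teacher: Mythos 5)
Your proposal is correct and follows essentially the same route as the paper's proof: purify the preparation into the EPR version, argue that the \NPE timing constraints force $\dP_0$ to split the qubit into two registers before learning $\theta$, apply Corollary~\ref{cor:CIT} with $n=1$, and finish with Fano. The only step you elide is that Theorem~\ref{thm:Fano} is stated for \emph{classical} side information, whereas $E$ and $F$ are quantum: the paper first lets each prover measure its register (with a $\Theta$-dependent basis) to obtain classical $Y_0,Y_1$ and invokes the generalized Holevo bound (Theorem~\ref{thm:Holevo}) to get $\Hone(X|\Theta Y_0)+\Hone(X|\Theta Y_1)\ge \Hone(X|\Theta E_0)+\Hone(X|\Theta E_1)\ge 1$ before applying Fano; adding that one line closes the argument.
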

Recall that $\h$ denotes the binary entropy function and $\h^{-1}$ its inverse on the branch $0 \leq p \leq \frac12$.
A numerical calculation shows that \smash{$\h^{-1}(\frac12) \geq 0.11$} and thus $\eps \leq 0.89$. A particular attack for a dishonest prover \smash{$\dP$}, sitting in-between $V_0$ and $P$, is to measure the qubit \smash{$H^{\theta}\ket{x}$} in the {\em Breidbart} basis, resulting in an acceptance probability of $\cos(\pi/8)^2 \approx 0.85$. This shows that our analysis is pretty tight.

\begin{proof}
In order to analyze the position-verification scheme it is convenient to
consider an equivalent {\em purified} version, given in
Figure~\ref{fig:BasicEPR}. The only difference between the original
and the purified scheme is the preparation of the bit~$H^{\theta}\ket{x}$.
In the purified version, it is done by preparing 
$\ket{\Phi_{AB}}=(\zero\zero + \one\one)/\sqrt{2}$ and measuring~$A$ in basis~$\theta$.
This way of preparation changes the point in time when $V_0$~measures~$A$,
and the point in time when $V_1$~learns~$x$. 
This, however, has no
influence on the view of the (dishonest or honest) prover, nor on
the joint distribution of $\theta$, $x$ and $x'$, and thus neither
on the probability that $V_0$ and $V_1$~accept. It therefore
suffices to analyze the purified version.

\begin{figure}[htb]
\small
\begin{protocol}
\begin{enumerate}\setlength{\parskip}{0.1ex}\setcounter{enumi}{-1}
\item $V_0$ and $V_1$ privately agree on a random bit $\theta \in \set{0,1}$.
\item $V_0$ prepares an EPR pair $\ket{\Phi_{AB}} \in \H_A \o \H_B$, keeps qubit~$A$ and sends~$B$ to~$P$, and $V_1$~sends the bit~$\theta$
to~$P$, so that $B$~and~$\theta$ arrive at the same time at~$P$.
\item When $B$ and $\theta$ arrive, $P$ measures $B$ in basis $\theta$ to observe $x' \in \set{0,1}$, and sends $x'$ to $V_0$ and $V_1$.
\item Only now, when $x'$ arrives, $V_0$ measures $A$ in basis
$\theta$ to observe $x$, and privately sends~$x$ to~$V_1$. 
$V_0$ and $V_1$ accept
 if on both sides $x'$~arrives in time and $x' = x$.
\end{enumerate}
\end{protocol}
 \caption{EPR version of $\basicSP^\eps$. }
 \label{fig:BasicEPR}
\end{figure}

We first consider security against two dishonest provers $\dP_0$ and
$\dP_1$, where $\dP_0$ is between $V_0$ and $P$ and $\dP_1$ is between
$V_1$ and $P$. In the end we will argue that a similar argument holds
for multiple dishonest provers on either side.

Since $V_0$ and $V_1$ do not accept if $x'$ does not arrive in time
and dishonest provers do not use pre-shared entanglement in the
\NPE-model, any potentially successful strategy of $\dP_0$ and $\dP_1$
must look as follows. As soon as $\dP_1$ receives the bit $\theta$
from $V_1$, she forwards (a copy of) it to $\dP_0$. Also, as soon as
$\dP_0$ receives the qubit $A$, she applies an arbitrary quantum
operation to the received qubit $A$ (and maybe some ancillary
  system she possesses) that maps it into a bipartite state $E_0 E_1$
(with arbitrary state space $\H_{E_0} \o \H_{E_1}$), and $\dP_0$ keeps
$E_0$ and sends $E_1$ to $\dP_1$. Then, as soon as $\dP_0$ receives
$\theta$, she applies some measurement (which may depend on $\theta$)
to $E_0$ to obtain $\hat{x}_0$, and as soon as $\dP_1$ receives $E_1$,
she applies some measurement (which may depend on $\theta$) to $E_1$ to
obtain $\hat{x}_1$, and both send $\hat{x}_0$ and $\hat{x}_1$
immediately to $V_0$ and $V_1$, respectively. We will argue that
the probability that $\hat{x}_0 = x$ {\em and} $\hat{x}_1 = x$ is
upper bounded by $\eps$ as claimed.

Let $\ket{\psi_{A\, E_0 E_1}} \in \H_A \otimes \H_{E_0} \otimes
\H_{E_1}$ be the state of the tri-partite system $A\, E_0 E_1$ after
$\dP_0$ has applied the quantum operation to the qubit $B$. Note that
in the \NPE model, the quantum operation and thus $\ket{\psi_{A\, E_0
    E_1}}$ does not depend on $\theta$.\footnote{We stress that this independency breaks down if $\dP_0$ and $\dP_1$ may start off with an entangled state, because then $\dP_1$ can act on his part of the entangled state in a $\theta$-dependent way, which makes the overall state dependent of $\theta$. }
Recall that~$x$ is obtained by measuring $A$ in either the computational (if
$\theta = 0$) or the Hadamard (if $\theta = 1$) basis. Writing $x$,
$\theta$, etc.\ as random variables $X$, $\Theta$, etc., it follows
from CIT (specifically Corollary~\ref{cor:CIT}) that $ \Hone(X|\Theta
E_0) + \Hone(X|\Theta E_1) \geq 1 \, . $ Let $Y_0$ and $Y_1$ denote
the classical information obtained by $\dP_0$ and $\dP_1$ as a result
of measuring $E_0$ and $E_1$, respectively, with bases that may depend
on~$\Theta$. By the (generalized) Holevo bound Theorem~\ref{thm:Holevo}, it
follows from the above that
$$\Hone(X|\Theta Y_0) + \Hone(X|\Theta Y_1) \geq 1\ ,$$
therefore $\Hone(X|\Theta Y_i)\geq\frac12$ for at least one $i \in \set{0,1}$.
By Fano's inequality (Theorem~\ref{thm:Fano}), we can conclude that the
corresponding error probability $q_i = \P[\hat{X}_i \!\neq\! X]$ satisfies
\(
\textstyle\h(q_i) \geq \frac12 .
\)
It thus follows that the failure probability
$$q = \P[\hat{X}_0 \!\neq\! X \vee \hat{X}_1 \!\neq\! X]
\geq \max\set{q_0,q_1} \ge \h^{-1}(\frac12)\ ,$$ and the probability
of $V_0$ and $V_1$ accepting, \smash{$\P[\hat{X}_0 \!=\! X \wedge
  \hat{X}_1 \!=\! X] = 1 - q$}, is indeed upper bounded by $\eps$ as
claimed.

It remains to argue that more than two dishonest provers in the \NPE
model cannot do any better. The reasoning is the same as
above. Namely, in order to respond in time, the dishonest provers that
are closer to $V_0$ than $P$ must map the qubit $A$---possibly
jointly---into a bipartite state $E_0E_1$ {\em without knowing
  $\theta$}, and jointly keep $E_0$ and send $E_1$ to the dishonest
provers that are ``on the other side'' of $P$ (i.e., closer to
$V_1$). Then, the reply for $V_0$ needs to be computed from $E_0$~and~$\theta$ 
(possibly jointly by the dishonest provers that are closer to
$V_0$), and the response for $V_1$ from $E_1$ and $\theta$. Thus, it
can be argued as above that the success probability is bounded by
$\eps$ as claimed.
\end{proof}

\subsection{Reducing the Soundness Error}\label{sec:SPreduce}

In order to obtain a position-verification scheme with a negligible
soundness error, we can simply repeat the 1-round scheme
$\basicSP^\eps$ from Figure~\ref{fig:BasicSP}. Repeating the scheme
$n$ times {\em in sequence}, where the verifiers launch the next
execution only after the previous one is finished, reduces the
soundness error to $\eps^n$. Recall that in the \NPE modeL defined
in Section~\ref{subsec:model}, the
adversaries must start every round without pre-shared
entanglement. Therefore, the security of the sequentually repeated
scheme follows immediately from the
security of the 1-round scheme.

\begin{corollary}
In the \NPE model, the $n$-fold sequential repetition of $\basicSP^\eps$ from
Figure~\ref{fig:BasicSP} is $\eps^n$-sound with $\eps = 1-
\h^{-1}(\frac12)$.
\end{corollary}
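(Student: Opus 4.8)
The plan is to reduce the $n$-round statement to $n$ applications of the single-round bound of Theorem~\ref{thm:BasicSP}, exploiting the defining feature of the \NPE model: the dishonest provers must enter each round without any pre-shared entanglement. Since the verifiers accept overall only if every one of the $n$ sequential executions of $\basicSP^\eps$ is accepted, I would first write the overall acceptance probability as a product of conditional per-round acceptance probabilities,
\[
\P[\text{all } n \text{ rounds accept}] = \prod_{j=1}^{n} \P[\text{round } j \text{ accepts} \mid \text{rounds } 1,\dots,j-1 \text{ accept}],
\]
and then argue that each factor is at most $\eps$.

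Next I would fix the start of round $j$, conditioned on the entire classical history of the first $j-1$ rounds (bases, challenges, replies, and whatever classical memory the coalition chose to keep). The key observation is that, by the \NPE assumption, this history is purely \emph{classical} from the adversaries' point of view: they are forbidden to carry quantum entanglement from one round into the next, so at the beginning of round $j$ they share no entangled state across the two sides, and in particular none with the verifiers' fresh EPR half $A$. Consequently, when $\dP_0$ applies her quantum operation to the intercepted qubit in round $j$, the resulting tripartite state $\ket{\psi_{A E_0 E_1}}$ is exactly of the form to which the single-round analysis applies, and it is independent of $\theta$, for each fixed value of the classical side information.

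I would then invoke Theorem~\ref{thm:BasicSP} itself: its proof bounds the per-round success probability for an \emph{arbitrary} strategy and an \emph{arbitrary} state $\ket{\psi_{A E_0 E_1}}$ prepared from the fresh qubit, using only Corollary~\ref{cor:CIT} together with the $\theta$-independence of $\ket{\psi_{A E_0 E_1}}$. Since the argument holds verbatim for every fixed classical history, averaging over histories via Lemma~\ref{lemma:Average} keeps the conditional acceptance probability of round $j$ bounded by $\eps$, uniformly over the past. Multiplying the $n$ conditional factors yields $\eps^n$, as claimed.

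I expect the main obstacle to be precisely this conditioning argument: one must ensure that the classical information the coalition accumulates across rounds cannot be parlayed into a per-round advantage. The resolution is that Theorem~\ref{thm:BasicSP} is already stated for arbitrary adversarial strategies and arbitrary prepared states, so classical side information is harmlessly absorbed into the choice of strategy and of $\ket{\psi_{A E_0 E_1}}$; the only thing the \NPE model must rule out---and does---is quantum entanglement surviving between rounds, which is exactly what would otherwise break the $\theta$-independence of $\ket{\psi_{A E_0 E_1}}$ that the single-round proof relies on.
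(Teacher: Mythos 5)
Your proposal is correct and follows the same route as the paper, which simply observes that since the \NPE model forces the adversaries to enter each round without pre-shared entanglement, the soundness of the sequential repetition follows immediately from the single-round bound of Theorem~\ref{thm:BasicSP}. Your write-up merely makes explicit the conditioning on the classical history and the per-round application of the single-round theorem that the paper leaves implicit.
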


In terms of round complexity, a more efficient way of repeating
$\basicSP^\eps$ is by repeating it {\em in parallel}: $V_0$ sends $n$
BB84 qubits $H^{\theta_1}\ket{x_1},\ldots,H^{\theta_n}\ket{x_n}$ and
$V_1$ sends the corresponding bases $\theta_1,\ldots,\theta_n$ to $P$
so that they all arrive at the same time at $P$'s position, and $P$
needs to reply with the correct list $x_1,\ldots,x_n$ in time. This
protocol is
obviously more efficient in terms of round complexity and appears to
be the preferred solution. However, we do not have a proof for the
security of the parallel repetition of $\basicSP^\eps$.

\subsection{Position Verification in Higher Dimensions}\label{sec:3D}

The scheme $\basicSP^\eps$ can easily be extended into higher
dimensions.
The scheme for $d$ dimensions is a generalization of the scheme $\basicSP^\eps$ in
Figure~\ref{fig:BasicSP}, where the challenges of the verifiers
$V_1$, $V_2$, $\ldots$, $V_d$ form a {\em sum sharing} of the basis
$\theta$, i.e., are random $\theta_1,\theta_2, \ldots, \theta_d \in \set{0,1}$
such that their modulo-2 sum equals $\theta$.
As specified in Figure~\ref{fig:1roundSP}, the state
$H^\theta\ket{x}$ and the shares $\theta_i$ are
sent by the verifiers to $P$ such that they arrive at $P$'s
(claimed) position at the same time. $P$ can reconstruct~$\theta$ 
and measure $H^\theta\ket{x}$ in the correct basis to
obtain $x' = x$, which he sends to all the verifiers who check if
$x'$ arrives in time and equals $x$.

We can argue security by a reduction to the scheme in 1 dimension.
For the sake of concreteness, we consider 3~dimensions. For 3 dimensions, we need a set of (at least) 4
non-coplanar verifiers $V_0, \dotsc, V_3$, and the prover $P$ needs
to be located inside the tetrahedron defined by the positions of the
4 verifiers.
We consider a coalition of dishonest provers
$\dP_0,\ldots,\dP_{\ell}$ at arbitrary positions but different to
$P$. We may assume that
$\dP_0$ is closest to $V_0$. It is easy to see that there exists a
verifier~$V_j$ such that $d(\dP_0,V_j) > d(P,V_j)$. Furthermore, we
may assume that $V_j$ is not $V_0$ and thus we assume for
concreteness that it is $V_1$. We strengthen the dishonest
provers by giving them $\theta_2$ and $\theta_3$ for free from the
beginning. Since, when $\theta_2$ and $\theta_3$ are given, $\theta$
can be computed from $\theta_1$ and vice versa, we may assume that
$V_1$ actually sends $\theta$ as challenge rather than~$\theta_1$.
But now, $\theta_2$ and $\theta_3$ are just two random bits,
independent of $\theta$ and $x$, and are thus of no help to the
dishonest provers and we can safely ignore them.

As $\dP_0$ is further away from $V_1$ than $P$ is, $\dP_0$ cannot afford to store $H^\theta\ket{x}$ until he has learned $\theta$. Indeed, otherwise $V_1$ will not get a reply in time. Therefore, before she learns $\theta$, $\dP_0$ needs to apply a quantum transformation to $H^\theta\ket{x}$ with a bi-partite output and keep one part of the output, $E_0$, and send the other part, $E_1$ to $\dP_1$. Note that this quantum transformation is independent of $\theta$, as long as $\dP_0$ does not
share an entangled state with the other dishonest provers (who might know $\theta$ by now).
Then, $\hat{x}_0$ and $\hat{x}_1$, the replies that are sent to $V_0$ and $V_1$, respectively, need to be computed from $\theta$ and $E_0$ alone and from $\theta$ and $E_1$ alone. It follows from the analysis of the scheme in one dimension that the probability that both $\hat{x}_0$ and $\hat{x}_1$ coincide with $x$ is at most $\eps = 1- \h^{-1}(\frac12)$.

\begin{corollary}
The above generalization of $\basicSP^\eps$ to $d$ dimensions is
$\eps$-sound in the \NPE model with $\eps = 1- \h^{-1}(\frac12)$.
\end{corollary}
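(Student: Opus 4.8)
The plan is to reduce the $d$-dimensional scheme to the one-dimensional analysis underlying Theorem~\ref{thm:BasicSP}. Fix a coalition $\dP_0,\ldots,\dP_\ell$, all located away from $\pos$, and let $\dP_0$ denote the prover closest to $V_0$, who is therefore the first to touch the qubit $H^\theta\ket{x}$. The first step is geometric: since $\pos$ lies in $\Hull(\pos_0,\ldots,\pos_d)$ while $\dP_0 \neq \pos$, the perpendicular bisector hyperplane of the segment $[\dP_0,\pos]$ cannot have every verifier on $\dP_0$'s (closed) side---otherwise the whole hull, and in particular $\pos$, would lie there, contradicting that $\pos$ is strictly on its own side. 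Hence some verifier $V_j$ satisfies $\d(\dP_0,V_j) > \d(\pos,V_j)$, and I would argue that $V_j$ may be taken different from $V_0$; relabelling, assume $V_j = V_1$.

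Next I would \emph{strengthen} the adversary---which can only increase the acceptance probability---by revealing $\theta_2,\ldots,\theta_d$ to the whole coalition at the outset. Because $\theta = \theta_1 \oplus \cdots \oplus \theta_d$, the bit $\theta$ is now recoverable from $\theta_1$ and vice versa, so without loss of generality $V_1$ sends $\theta$ itself; conditioned on the revealed shares, the remaining $\theta_2,\ldots,\theta_d$ are independent uniform bits carrying no information about $(x,\theta)$ and may be discarded. What remains is exactly the one-dimensional data pattern: the qubit from $V_0$ and the basis $\theta$ from $V_1$.

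The timing argument then transplants essentially verbatim. As $\d(\dP_0,V_1) > \d(\pos,V_1)$, the coalition cannot afford to have $\dP_0$ wait for $\theta$ before committing the message destined for $V_1$; hence $\dP_0$ must apply to the incoming qubit a quantum operation \emph{independent of $\theta$}---this $\theta$-independence is exactly where the \NPE assumption is indispensable, since $\dP_0$ shares no entanglement with provers who may already hold $\theta$---producing a bipartite state $E_0E_1$, keeping $E_0$ and forwarding $E_1$ toward $V_1$. Consequently the reply to $V_0$ is computed from $(\theta,E_0)$ alone and the reply to $V_1$ from $(\theta,E_1)$ alone (grouping the provers on each side into a single party). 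This is precisely the configuration analysed for $\basicSP^\eps$: Corollary~\ref{cor:CIT} gives $\Hone(X|\Theta E_0) + \Hone(X|\Theta E_1) \geq 1$, the generalized Holevo bound (Theorem~\ref{thm:Holevo}) preserves this after the measurements, and Fano's inequality (Theorem~\ref{thm:Fano}) yields $\P[\hat x_0 = x \wedge \hat x_1 = x] \le 1 - \h^{-1}(\tfrac12) = \eps$.

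I expect the geometric step to be the only genuinely new ingredient, and its delicate point is justifying that the distant verifier can be chosen distinct from $V_0$, so that the qubit-holder $\dP_0$ is indeed forced to act before learning the basis; everything else is a careful re-packaging of the one-dimensional proof. The one conceptual caveat worth stressing is that the $\theta$-independence of $\dP_0$'s operation, and hence the entire reduction, collapses outside the \NPE model, since pre-shared entanglement would let a remote prover already holding $\theta$ effectively steer $\dP_0$'s operation.
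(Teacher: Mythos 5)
Your argument is essentially identical to the paper's: identify a verifier $V_j\neq V_0$ with $\d(\dP_0,V_j)>\d(P,V_j)$, hand the coalition $\theta_2,\ldots,\theta_d$ for free so that $V_1$ effectively sends $\theta$, and then invoke the timing constraint to force a $\theta$-independent split into $E_0E_1$, reducing to the CIT/Holevo/Fano analysis of $\basicSP^\eps$. The one step you flag as delicate---that the distant verifier can be taken different from $V_0$---is asserted at the same level of detail in the paper itself, so your proposal matches it both in approach and in rigor.
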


\section{Position-Based Authentication and Key-Exchange}\label{sec:auth+KE}

In this section we consider a new primitive: position-based
authentication. In contrast to position-verification, where the goal of
the verifiers is to make sure that entity $P$ is at the claimed
location $\pos$, the verifiers want to make sure that a given
message $m$ originates from an entity $P$ that is at the claimed
location $\pos$. We stress that it is not sufficient to first
execute a position-verification scheme with $P$ to ensure that $P$ is
at position $\pos$ and then have $P$ send or confirm $m$, because a
coalition of dishonest provers may do a {\em man-in-the-middle}
attack and stay passive during the execution of the positioning
scheme but modify the communicated message $m$.

Formally, in a position-based authentication scheme the prover
takes as input a message $m$ and the verifiers $V_0,\ldots,V_k$ take
as input a message $m'$ and the claimed position $\pos$ of $P$, and
we require the following security properties.
\begin{itemize}
\item{\em $\eps_c$-Completeness: } If  $m = m'$, $P$ is honest and at the claimed position $\pos$, and if there is no (coalition of) dishonest prover(s), then the verifiers jointly accept except with probability $\eps_c$.
\item{\em $\eps_s$-Soundness:} For any 
$\pos \in \Hull(\pos_0,\dotsc,\pos_k)$ and for any coalition of
dishonest provers $\dP_0,\ldots,\dP_\ell$ at locations all different
to $\pos$, if $m \neq m'$, the verifiers jointly reject except with probability $\eps_s$.
\end{itemize}

We build a position-based authentication scheme based on our
position-verification scheme. The idea is to incorporate the message to be
authenticated into the replies of the position-verification scheme. Our
construction is very generic and may also be useful for turning
other kinds of identification schemes (not necessarily
position-based schemes) into corresponding authentication schemes.
Our aim is  merely to show the existence of such a scheme; we do not
strive for optimization.
We begin by proposing a weak  position-based
authentication scheme for a 1-bit message $m$.

\subsection{Weak 1-bit authentication scheme}

Let $\SP^\eps$ be a  1-round position-verification scheme between
$k+1$ verifiers $V_0, \ldots, V_k$ and a prover $P$.
For simplicity, we assume that, like for the scheme $\basicSP^\eps$ from Section~\ref{sec:basicSP}, $x$ and $x'_0,\ldots,x'_k$ are classical, and $\SPver$ accepts if $x'_i = x$ for all $i$, and thus we understand the output of $\SPresp(\ch_0,\ldots,\ch_k)$ as a single element~$x'$ (supposed to be~$x$).
We require
$\SP^\eps$ to have perfect completeness and soundness $\eps < 1$.
We let $\bot$ be some special symbol.
We consider the weak authentication scheme given in Figure~\ref{fig:weakAuth} for a 1-bit message $m \in \set{0,1}$. We assume that $m$ has already been communicated to the verifiers and thus there is agreement among the verifiers on the message to be authenticated. The weak authentication scheme works by executing the 1-round position-verification scheme $\SP^\eps$, but letting $P$ replace his response $x'$ by $\bot$ with probability $q$, to be specified later.

\begin{figure}[htb]
\small
\begin{protocol}
\begin{enumerate}\setlength{\parskip}{0.1ex}\setcounter{enumi}{-1}
\item
$V_0$ generates $(\ch_0,\ldots,\ch_k)$ and $x$ using $\SPch$ and sends $\ch_i$
and $x$ to $V_i$ for $i=1,\ldots,k$.
\item
Every verifier $V_i$ sends $\ch_i$ to $P$ in such a way that all
$\ch_i$s arrive at the same time at $P$.
\item When the $\ch_i$s arrive, $P$ computes the authentication tag $t$ as follows and sends it back to all the verifiers.
\\ If $m = 1$ then $t := \SPresp(\ch_0,\ldots,\ch_k)$, and if $m = 0$
then $t := \bot$ with probability $q$ and \\ $t :=
\SPresp(\ch_0,\ldots,\ch_k)$ otherwise.

\item If different verifiers have received different values for $t$, or it didn't arrive in time, the verifiers abort.
\\ Otherwise, they jointly accept if  $t = x$ or both $m = 0$ and $t =
\bot$.
\end{enumerate}
\end{protocol}
 \caption{Generic position-based weak authentication scheme $\wAUTH^\eps$ for 1-bit message~$m$. }
 \label{fig:weakAuth}
\end{figure}

We analyze the success probability of an adversary authenticating a bit $m' \in \set{0,1}$. We consider the case where there is no honest prover present (we call this an {\em impersonation attack}), and the case where an honest prover is active and authenticates the bit $m \ne m'$ (we call this a {\em substitution attack}).

The following properties are easy to verify and follow from the security property of $\SP^\eps$.

\begin{lemma}\label{lemma:wAUTH}
Let $\dP$ be a coalition of dishonest provers not at the claimed
position and trying to authenticate message $m' = 1$. In case of an
impersonation attack, the verifiers accept with probability at most
$\eps$, and in case of a substitution attack (with $m = 0$), the
verifiers accept with probability at most $\delta = (1-q) + q\eps =  1-q(1-\eps) <
1$.
\end{lemma}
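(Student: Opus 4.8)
The plan is to reduce both statements directly to the $\eps$-soundness of the underlying scheme $\SP^\eps$. The key simplifying observation is that, since the adversary targets $m' = 1$, the verifiers' acceptance rule collapses to the single condition that all of them receive the \emph{same} tag $t$ in time with $t = x$ (the $\bot$-clause is inactive for $m'=1$); hence ``the verifiers accept'' is exactly the acceptance event of a run of $\SP^\eps$ whose response is $t$. For the \textbf{impersonation attack} no honest prover is present, so the coalition $\dP_0,\ldots,\dP_\ell$, located entirely at positions $\neq \pos$, must produce $t = x$ in time on its own. This is literally an execution of $\SP^\eps$ against this coalition, and $\eps$-soundness bounds its success by $\eps$.

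For the \textbf{substitution attack} the honest prover $P$ is present and authenticates $m = 0$, so $P$ answers with $\bot$ with probability $q$ and with the correct value $x' = x$ (using perfect completeness) with probability $1-q$. I would first record that this coin $C$ is $P$'s private randomness, independent of the challenge generation and of the adversary, and then condition on it:
$$\P[\text{accept}] \;=\; q\,\P[\text{accept}\mid C=\bot] \;+\; (1-q)\,\P[\text{accept}\mid C\neq\bot].$$
Bounding the last conditional probability trivially by $1$, it suffices to show $\P[\text{accept}\mid C=\bot]\le\eps$, whence $\P[\text{accept}]\le q\eps+(1-q)=\delta$, as claimed.

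The heart of the proof, and the step I expect to require the most care, is the bound $\P[\text{accept}\mid C=\bot]\le\eps$. The point is that in the event $C=\bot$ the honest prover's whole contribution is the symbol $\bot$, whose content and timing are independent of the secret $(x,\theta)$; thus $P$ leaks no information about $x$ and offers no useful timing signal, so it can be folded into the adversary. Concretely, from the $\wAUTH$-adversary I would build a coalition $\dP'$ against $\SP^\eps$ that runs the $\wAUTH$-adversary and internally emulates the erasure-mode prover by simply feeding it $\bot$; this emulation needs neither the location $\pos$ nor correct timing, precisely because $\bot$ is uninformative and a winning tag must equal $x$. Since $\dP'$ is a coalition at positions all $\neq \pos$ producing $t$ under the same distribution, its success probability equals $\P[\text{accept}\mid C=\bot]$ and is therefore at most $\eps$ by soundness. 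The one subtlety worth stating carefully is exactly this timing-irrelevance: the erasure-mode prover cannot act as a relay that helps the coalition answer in time, because it reveals nothing about $x$.
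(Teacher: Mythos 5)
Your proposal is correct and follows essentially the same route as the paper, which simply asserts that the lemma ``follows from the security property of $\SP^\eps$'' without spelling out the reduction: impersonation is literally the $\SP^\eps$ soundness game, and for substitution one conditions on the prover's erasure coin and observes that the $\bot$ branch reveals nothing about $x$, giving $(1-q)\cdot 1 + q\cdot\eps = \delta$. Your explicit treatment of why the erasure-mode prover at position $\pos$ cannot serve as a useful relay is a welcome elaboration of a point the paper leaves implicit.
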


On the other hand, $\dP$ can obviously authenticate $m' = 0$ by
means of a substitution attack with success probability 1; however,
informally, $\dP$ has bounded success probability in authenticating
message $m' = 0$ by means of an impersonation attack unless he uses
the tag $\bot$. (This fact is used later to obtain a strong
authentication scheme.)

Let us try to extend the above in order to get a strong authentication scheme.
Based on the observation that by performing a substitution attack on $\wAUTH^\eps$,
it is easy to substitute the message bit $m = 1$ by $m' = 0$ but
non-trivial to substitute $m =0$ by $m' = 1$,
a first approach to
obtain an authentication scheme with good security might be to apply
$\wAUTH^\eps$ bit-wise to a {\em balanced encoding} of the message.
Such an encoding should ensure that for any distinct messages $m$ and $m'$, there are many positions in which the encoding of $m'$ is $1$ but the encoding of $m$ is $0$.
Unfortunately, this is not good enough. The reason is that $P$ and
the verifiers are not necessarily synchronized. For instance, assume
we encode $m = 0$ into $c = 010101...01$ and $m' = 1$ into $c' =
101010...10$, and authentication works by doing $\wAUTH^\eps$
bit-wise on all the bits of the encoded message. If $\dP$ wants to
substitute $m = 0$ by $m' = 1$ then he can simply do the following.
He tries to authenticate the first bit $1$ of $c'$ towards the
verifiers by means of an impersonation attack. If he succeeds, which
he can with constant probability, he simply authenticates the
remaining bits $01010...10$ of $c'$ by using $P$, who is happy to
authenticate all of the bits of $c = 010101...01$. Because of
this issue of $\dP$ bringing $P$ and the verifiers out of sync, we
need to be more careful about the exact encoding we use.

\subsection{Secure Position-Based Authentication Scheme}

We specify a special class of codes, which is strong enough for our purpose.


\begin{definition}\label{def:codes}
Let $c\ \in \{0,1\}^{\code}$. A vector $e \in
\{-1,0,1\}^{2{\code}}$ is called an {\bf embedding} of $c$ if by
removing all the $-1$ entries in $e$ we obtain $c$.
Furthermore, for two strings $c, c' \in \{0,1\}^{{\code}}$ we say
that $c'$ $\mathbf \lambda${\bf-dominates}
$c$ if for all embeddings $e$ and $e'$ of
$c$ and $c'$ (at least) one of the following holds: (a) the number
of positions $i\in \{1,\ldots,2{\code}\}$ for which $e_i'=1$ {\em
and} $e_i < 1$ is at least $\lambda$, or
(b) there exist a consecutive sequence of indices $I$ such that
the set $J=\Set{i\in I}{ e'_i > -1}$ has size $|J|\ge 4\lambda$
and it contains at least $\lambda$ indices  $i \in J$ with $e_i=-1$.
\end{definition}
For instance, let
$c = 00...0\,11...1$  and
$c' = 11...1\,00...0$, where the blocks of $0$'s and $1$'s are of length $N/2$.
It is not hard to see that the two codewords $N/4$-dominate each other.
However, $\tilde{c}' = 0101...01$ does not dominate
$\tilde{c} = 1010...10$, since $\tilde{c}'$ can be embedded into $\ddag
0101...01\ddag\ddag...\ddag$ and $\tilde{c}$ into
$1010...10\ddag\ddag...\ddag$, where here and later we use $\ddag$ to
represent $-1$.

\begin{definition}
A code $C$ is {\bf $\mathbf \lambda$-dominating}, if any two codewords in $C$ $\lambda$-dominate each other.
\end{definition}
We note that the requirement for $\lambda$-dominating codes can be relaxed
in various ways
to allow a greater range of codes.

Let $\wAUTH^\eps$ be the above weak authentication scheme satisfying
Lemma~\ref{lemma:wAUTH}. In order to authenticate a message
$m \in \set{0,1}^\mu$ in a strong way (with $\lambda$ a security parameter), an encoding $c$ of $m$ using a
$\lambda$-dominating code $C$ is bit-wise authenticated by means of
$\wAUTH^\eps$, and the verifiers perform statistics over the number of
$\bot$s received.
The resulting authentication scheme is given in
Figure~\ref{fig:genAuth}; as for the weak scheme, we assume that the
message $m$ has already been communicated.
\begin{figure}[htb]
\small
\begin{protocol}
\begin{enumerate}\setlength{\parskip}{0.1ex}\setcounter{enumi}{-1}
\item $\P$ and the verifiers encode $m$ into a codeword $c=(c_1,\dotsc,c_{\code})\in C$, for a $\lambda$-dominating code~$C$.
\item\label{it1:repeat} For $j = 1,\ldots,{\code}$, the following is repeated in sequence.
\begin{enumerate}[\ref{it1:repeat}.1]\setlength{\parskip}{0.1ex}
\item $P$ authenticates $c_j$ by means of $\wAUTH^\eps$. Let $t_i$ be the corresponding tag received.
\item\label{it1:check}
If $j > 4\lambda$, the verifiers compute $n_\bot(j) = |\Set{i\in\set{j-4\lambda,\ldots,j}}{c_i=0 \wedge t_i = \bot}|$.
\end{enumerate}
\item
If any  of the $\wAUTH^\eps$ executions fails, or if $n_\bot(j)  >
8q \lambda$ for some round $j>4\lambda$, the verifiers jointly
\\ reject. Otherwise, $m$ is accepted.
\end{enumerate}
\end{protocol}
 \caption{A generic position-based authentication scheme $\AUTH$. }
 \label{fig:genAuth}
\end{figure}

\begin{theorem}
The generic position-based authentication scheme $\AUTH$ (Figure~\ref{fig:genAuth}) is
$Ne^{-2q\lambda}$-complete.
\end{theorem}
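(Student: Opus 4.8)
The plan is to bound the completeness error by showing that the honest protocol fails only if one of two bad events occurs: some individual $\wAUTH^\eps$ execution fails, or the verifiers reject because $n_\bot(j) > 8q\lambda$ for some round $j > 4\lambda$. Since $\SP^\eps$ has perfect completeness and the honest $P$ either sends the correct tag $x$ (when $c_j = 1$) or sends $x$ with probability $1-q$ and $\bot$ with probability $q$ (when $c_j = 0$), no individual execution ever ``fails'' in the sense of Step~3 for the honest prover: whenever $P$ authenticates $c_j = 1$ he returns the correct $x$, and whenever he returns $\bot$ it is only in a $c_j = 0$ position, which is permitted. Hence the only way the honest run is rejected is through the statistical test on the number of $\bot$'s. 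So the whole proof reduces to bounding the probability that $n_\bot(j)$ exceeds the threshold $8q\lambda$ in some window.

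The key step is a concentration argument. For a fixed round $j > 4\lambda$, the quantity $n_\bot(j)$ counts, over the at most $4\lambda+1$ indices $i \in \set{j-4\lambda,\ldots,j}$ with $c_i = 0$, how many produced $t_i = \bot$. Each such position independently yields $\bot$ with probability exactly $q$, so $n_\bot(j)$ is stochastically dominated by a sum of at most $4\lambda+1$ (effectively $\approx 4\lambda$) independent Bernoulli$(q)$ variables, with mean $\approx 4q\lambda$. The threshold $8q\lambda$ is twice the mean, so I would apply a multiplicative Chernoff bound: $\P[n_\bot(j) > 8q\lambda] \leq \P[\text{Bin} > 2\cdot\mathbb{E}] \leq e^{-\mathbb{E}/3}$ or a similar form, giving a bound on the order of $e^{-c q\lambda}$. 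Matching the stated $e^{-2q\lambda}$ requires choosing the right Chernoff variant (e.g.\ the bound $\P[S \geq (1+\delta)\mu] \leq e^{-\mu\delta^2/3}$ with $\delta = 1$, $\mu = 4q\lambda$ yields $e^{-4q\lambda/3}$; a sharper entropic bound gives the clean $e^{-2q\lambda}$), and I expect the paper uses a Hoeffding- or relative-entropy-type tail that delivers exactly $e^{-2q\lambda}$ per window.

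Finally, I would take a union bound over the at most $N$ rounds $j$ at which the test is applied. Each round contributes failure probability at most $e^{-2q\lambda}$, so the total probability that the honest prover is rejected is at most $N e^{-2q\lambda}$, which is the claimed completeness error $\eps_c = N e^{-2q\lambda}$.

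The main obstacle is making the concentration bound rigorous and matching the exact constant in the exponent. The subtlety is that the number of $c_i = 0$ indices inside a given window of $4\lambda+1$ positions depends on the fixed codeword $c$ and need not be close to $4\lambda$; I must verify that the bound holds \emph{uniformly} over windows regardless of how many zeros they contain. This is fine because fewer zeros only lowers the mean and hence the tail, so the worst case is a window that is all zeros (mean $\approx 4q\lambda$), and it is precisely this worst case that the threshold $8q\lambda$ is calibrated against. The other point of care is that the $\bot$-choices across different positions are genuinely independent (they are fresh coin flips by $P$ in each execution), so the Chernoff hypothesis is legitimately satisfied; I would state this independence explicitly before invoking the tail bound.
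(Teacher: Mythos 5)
Your proposal is correct and follows essentially the same route as the paper: the honest prover can only be rejected via the $n_\bot(j) > 8q\lambda$ test, a Chernoff bound controls each window at $e^{-2q\lambda}$, and a union bound over the at most $N$ rounds gives $Ne^{-2q\lambda}$. You are in fact more careful than the paper's (very terse) proof in flagging the worst-case-window issue and the choice of Chernoff variant needed to hit the stated constant.
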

\begin{proof}
  An honest prover which follows the above scheme can fail only if for
  some round $r$, $n_\bot > 8q\lambda$. Using the Chernoff
  bound~\cite{Chernoff52}, the probability of having $n_\bot >
  8q\lambda$ at a specific round $r$, is upper bounded by
  $e^{-2q\lambda}$. Using the union bound for every possible round
  $j$, we can bound the failure probability with $Ne^{-2q\lambda}$.
\end{proof}

Before we analyze the security of the authentication scheme, let us
discuss the possible attacks on it. We
treat $\dP$ as a single identity, however $\dP$ represents a
collaboration of adversaries. Similarly, we  refer
the $k+1$ verifiers as a single entity, $V$. We point
out that we do not assume that honest $P$ and $V$ have synchronized
clocks. Therefore, we allow $\dP$ to arbitrarily schedule and
interleave the ${\code}$ executions of $\wAUTH^\eps$ that $V$
performs with the ${\code}$ executions that $P$ performs. The
only restriction on the scheduling is that $P$ and $V$
perform their executions of $\wAUTH^\eps$ in the specified order.

This means that at any point in time during the attack when $P$ has executed $\wAUTH^\eps$ for the bits $c_1,\ldots,c_{j-1}$ and $V$ has executed $\wAUTH^\eps$ for the bits $c'_1,\ldots,c'_{j'-1}$ and both are momentarily inactive (at the beginning of the attack: $j = j' = 1$), \smash{$\dP$} can perform one of the following three actions. (1)  Activate $V$ to run $\wAUTH^\eps$ on $c'_{j'}$ but not activate $P$; this corresponds to an impersonation attack. (2) Activate $V$ to run $\wAUTH^\eps$ on $c'_{j'}$ and activate $P$ to run $\wAUTH^\eps$ on $c_j$; this corresponds to a substitution attack if $c_j \neq c'_{j'}$. (3) Activate $P$ to run $\wAUTH^\eps$ on $c_j$ but not activate  $V$; this corresponds to ``fast-forwarding'' $P$.
We note that \smash{$\dP$}'s choice on  which action to perform
 may be adaptive and depend on what he has seen so far.
However, since $V$ and $P$ execute $\wAUTH^\eps$ for each position within $c$ independently, information gathered from previous executions of $\wAUTH^\eps$ does not improve $\dP$'s success probability to break the next execution.

It is easy to see that any attack with its (adaptive) choices of
(1), (2) or (3)  leads to embeddings $e$ and $e'$ of $c$ and $c'$,
respectively. Indeed, start with  empty strings $e=e'=\emptyset$ and
update them as follows. For each of $\dP$'s rounds, update $e$ by $e\ddag$ and $e'$ by
$e'c'_{j'}$ if $\dP$ chooses (1), update $e$ by $ec_j$ and $e'$ by
$e'c'_{j'}$ if he chooses (2), and update $e$ by $ec_j$ and $e'$ by
$e'\ddag$ if he chooses (3). In the end, complete $e$ and $e'$ by
padding them with sufficiently many $\ddag$s to have them of length
$2{\code}$. It is clear that the obtained $e$ and $e'$ are indeed
valid embeddings of $c$ and $c'$, respectively.

\begin{theorem}\label{cor:CantAuthDom}
For any $\eps > 0$ and $0 < q < (1-\eps)/8$, the generic
position-based authentication scheme $\AUTH$
(Figure~\ref{fig:genAuth}) is
$2^{-\Omega(\lambda)}$-sound in the \NPE model.
\end{theorem}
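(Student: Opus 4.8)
The plan is to reduce soundness to the combinatorial domination property of the code and then treat the two cases of Definition~\ref{def:codes} separately. Since $m \neq m'$ and the encoding is injective, the verifiers' codeword $c'$ and the honest prover's codeword $c$ are distinct, so $c'$ $\lambda$-dominates $c$. As explained in the paragraph preceding the theorem, any adversarial strategy --- an adaptive interleaving of the impersonation~(1), substitution~(2) and fast-forward~(3) actions --- induces embeddings $e$ and $e'$ of $c$ and $c'$, where $e'_i > -1$ exactly when the verifiers are activated in round~$i$ and $e_i = -1$ exactly when the honest prover is \emph{not} activated (an impersonation). Domination guarantees that the realized $e,e'$ satisfy (a) or (b), so I would bound $\P[\text{accept}]$ by $\P[\text{accept} \wedge (a)] + \P[\text{accept}\wedge(b)]$ and show that each term is $2^{-\Omega(\lambda)}$.

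For case~(a) there are at least $\lambda$ rounds with $e'_i = 1$ and $e_i < 1$, i.e.\@ the verifiers run $\wAUTH^\eps$ on a bit~$1$ of $c'$ while the honest prover is either idle (impersonation) or is authenticating a bit~$0$ (substitution). By Lemma~\ref{lemma:wAUTH} each such execution makes the verifiers accept with probability at most $\delta = 1-q(1-\eps) < 1$ (impersonation gives the even smaller bound $\eps < \delta$). Because the $\wAUTH^\eps$ instances use freshly and independently generated challenges, this per-execution bound holds conditioned on the entire past; a routine iterated-expectation (martingale) argument then shows that the verifiers pass all $\geq \lambda$ of these hard executions with probability at most $\delta^\lambda = 2^{-\Omega(\lambda)}$.

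For case~(b) some window of $|J| \geq 4\lambda$ consecutive verifier executions contains at least $\lambda$ impersonations, which I split by the verifier bit~$c'_i$ they target. On a bit~$1$ the symbol $\bot$ is rejected, so the adversary must produce the correct tag, succeeding with probability at most $\eps$ by soundness of $\SP^\eps$. On a bit~$0$ the adversary may instead send $\bot$, which is accepted but is counted by $n_\bot$; since the statistics check enforces $n_\bot \leq 8q\lambda$ on every length-$4\lambda$ window and $q < (1-\eps)/8 < \tfrac18$, at most $8q\lambda$ of the impersonations can be answered by $\bot$ without triggering rejection. Hence at least $\lambda - 8q\lambda = \lambda(1-8q) = \Omega(\lambda)$ of the impersonations must be answered with a correct tag, each succeeding with probability at most $\eps$, for an overall bound $\eps^{\Omega(\lambda)} = 2^{-\Omega(\lambda)}$. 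For the canonical code $c=0^{N/2}1^{N/2}$, $c'=1^{N/2}0^{N/2}$ with $\lambda = N/4$ one has $|J| \le N = 4\lambda$, so the relevant window coincides with a single statistics window and the count is immediate; in general one applies the check to an appropriate length-$4\lambda$ sub-window of~$J$.

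The main obstacle is controlling the \emph{asynchrony and adaptivity}: the adversary chooses (1)/(2)/(3) online and may bring $P$ and $V$ out of sync, so the ``hard'' executions are only revealed during the run. The argument therefore hinges on the independence of the individual $\wAUTH^\eps$ challenges --- which lets the single-execution bounds of Lemma~\ref{lemma:wAUTH} and of $\SP^\eps$-soundness be multiplied along any adaptive branch --- together with the precise calibration of the window length $4\lambda$, the acceptance threshold $8q\lambda$, and the constraint $q < (1-\eps)/8$, which is exactly what forces a constant fraction of the impersonations in case~(b) to be answered with correct tags rather than with the ``free'' erasure symbol~$\bot$.
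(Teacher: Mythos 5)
Your proof is correct and follows essentially the same route as the paper's: reduce to the $\lambda$-domination of the induced embeddings, bound case (a) by $\delta^\lambda$ via Lemma~\ref{lemma:wAUTH}, and in case (b) use the $n_\bot$ window check together with $q<(1-\eps)/8$ to force $\Omega(\lambda)$ impersonations that must be answered with correct tags, each succeeding with probability at most $\eps$. You actually supply somewhat more detail than the paper (the iterated-expectation handling of adaptivity and the explicit bit-$0$/bit-$1$ split in case (b)), but the decomposition and key steps are the same.
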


\begin{proof}
Let $m$ and $m' \neq m$ be the messages input by $P$ and the verifiers, respectively, and let $c$ and $c'$ be their encodings. Furthermore, let $e$ and $e'$ be their embeddings, determined (as explained above) by $\dP$'s attack.
By the condition on the $\lambda$-dominating code $C$ we know that one of the
two properties (a) or (b) of Definition~\ref{def:codes} holds. If (a) holds, the number of positions
$i \in \set{1,\ldots,2{\code}}$ for which $e'_i = 1$ and $e_i \in
\set{-1,0}$ is $\lambda$. In this case, by construction of the
embeddings,  in his attack $\dP$ needs to authenticate (using
$\wAUTH^\eps$) the bit $1$ at least $\lambda$ times (by means of an
impersonation or a substitution attack). By Lemma 2, the success
probability of $\dP$ is thus at most $\delta^{\lambda}$, which is
$2^{-\Omega(\lambda)}$.
In the case where property (b) holds, there exists a consecutive sequence of indices $I$ such that the set $J=\Set{i\in I}{ e'_i > -1}$ has size $|J|\ge 4\lambda$
and contains at least $\lambda$ indices  $i \in J$ with $e_i=-1$.
For
any such index $i \in J$ with $e_i=-1$, $\dP$ needs to authenticate (using
$\wAUTH^\eps$) the bit $e'_i$ by means of an impersonation attack, while
he may use $\bot$ for (at most) a $8q$-fraction of those $i$'s.

However, by the $\eps$-soundness of $\SP^\eps$, if we require $\eps
< 1 - 8q$, the probability of $\dP$ succeeding in this attack is
exponentially small in~$\lambda$.
\end{proof}

A possible choice for a dominating code for $\mu$-bit messages is the {\em balanced repetition code} $\Cbrl^\mu$, obtained by applying the code
$\Cbrl = \{00..011..1, 11..100..0\}\subset \{0,1\}^{2\ell}$ bit-wise.

\begin{lemma}
For any $\ell$ and $\mu$, the balanced repetition code $\Cbrl^\mu$
is $\ell/4$-dominating.
\end{lemma}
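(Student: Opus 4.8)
The plan is to fix two distinct messages $m\neq m'$ with encodings $c,c'\in\{0,1\}^{N}$ (here $N=2\ell\mu$) and prove that $c'$ $\tfrac{\ell}{4}$-dominates $c$; the reverse direction follows by the identical argument with the roles of $m,m'$ exchanged, giving $\ell/4$-domination of the whole code. First I would exploit the block structure of $\Cbrl^\mu$: each message bit is sent to a length-$2\ell$ block equal to $0^\ell 1^\ell$ or $1^\ell 0^\ell$, so since $m\neq m'$ there is a block $j$ on which $c$ and $c'$ are complementary. This yields a \emph{critical run} of $\ell$ consecutive coordinates $p+1,\dots,p+\ell$ (one half of block $j$) on which $c'_i=1$ while $c_i=0$. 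Throughout I would adopt an ``alignment'' view of the embeddings $e,e'\in\{-1,0,1\}^{2N}$: writing $u(i)$ and $v(i)$ for the number of non-$(-1)$ entries of $e$, respectively $e'$, among the first $i$ coordinates, the walk $\Delta(i):=u(i)-v(i)$ has steps in $\{-1,0,+1\}$ and $\Delta(0)=\Delta(2N)=0$. In this language, condition (a) of Definition~\ref{def:codes} counts coordinates where $e'_i=1$ but $e_i\neq1$, and the ``$e_i=-1$ with $e'_i\neq-1$'' coordinates of (b) are exactly the $-1$-steps of $\Delta$ (a coordinate that is $e'$-active while $e$ stalls).

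The second step is a dichotomy on the window $I=[a_1,a_\ell]$, where $a_1<\dots<a_\ell$ are the coordinates at which $e'$ emits the $\ell$ critical $1$'s. Because these bits are consecutive in $c'$, the $e'$-active set inside $I$ is exactly $J=\{a_1,\dots,a_\ell\}$, so $|J|=\ell=4\lambda$. Classifying each $a_s$ according to whether $e$ emits $1$, emits $0$, or stalls: if at least $\ell/4$ of them carry an $e$-$0$, those coordinates already witness (a); if at least $\ell/4$ carry an $e$-stall, then $J$ witnesses (b) on this single window. The only surviving case is that more than $\ell/2$ of the $a_s$ are met by $e$ emitting a $1$. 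Since the underlying $c$-coordinates $p+1,\dots,p+\ell$ are all $0$, such a $1$ must be read from a $c$-index \emph{outside} $[p+1,p+\ell]$, which forces $|\Delta|$ to be of order $\ell$ somewhere on the critical run (while $e'$ reads the critical half-block, $e$ is reading the complementary half-block or a neighbouring block).

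The remaining work converts this large offset into (b) via the boundary conditions $\Delta(0)=\Delta(2N)=0$. The clean sub-case is a large \emph{positive} offset: if $e$ stays ahead by $\ge\ell/4$ until it has emitted all $N$ of its bits, then $e'$ still has $\ge\ell/4$ bits to emit afterwards, so the last $\ell$ coordinates that are $e'$-active contain $\ge\ell/4$ contiguous $e$-stalls, giving (b); a symmetric argument at the head handles a large negative offset.

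The main obstacle is to rule out the adversary's attempt to \emph{launder} the offset back to $0$ gradually, keeping the number of $-1$-steps below $\lambda=\ell/4$ inside every window of $\ell$ consecutive $e'$-active coordinates (thereby negating (b)) while never accumulating $\lambda$ coordinates of type (a). Here the repetition structure is essential. I would argue that negating (b) forces $\Delta$ to change only slowly relative to the progress of $v$, so bringing $\Delta$ down from $\Theta(\ell)$ back to $0$ requires $v$ to advance through more than a constant number of blocks while $\Delta$ sits at values that are \emph{not} multiples of $\ell$; but any offset that is not a multiple of $\ell$ misaligns the length-$\ell$ constant runs of $c'$ against those of $c$, so each such block contributes $\Omega(\ell)$ fresh coordinates with $c'_i=1,\ e_i\neq1$, which feeds (a). Conversely, returning $\Delta$ to $0$ in one concentrated descent packs $\ge\lambda$ $-1$-steps into a single window and yields (b) directly. (At the two extreme offsets $0$ and $\ell$: offset $0$ together with the negation of (a) forces every $1$ of $c'$ to sit on a $1$ of $c$, and since both are balanced this forces $c=c'$, i.e. $m=m'$, a contradiction.) Making this ``slow change of $\Delta$ versus misalignment'' trade-off quantitatively tight enough to land exactly on $\lambda=\ell/4$ is the delicate part; I expect to tune the window lengths and the split between the (a)- and (b)-counts so as to absorb the boundary effects at the two ends of the codeword.
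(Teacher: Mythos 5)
Your setup and your first dichotomy are sound: isolating a half-block of $\ell$ consecutive positions where $c'$ is $1$ and $c$ is $0$, taking the window $I=[a_1,a_\ell]$ of the corresponding $e'$-active coordinates so that $|J|=\ell=4\lambda$, and splitting according to whether at least $\ell/4$ of the $a_s$ meet an $e$-zero (giving (a)), an $e$-stall (giving (b)), or neither---in which case more than $\ell/2$ of them meet an $e$-one read from a $c$-index outside the half-block, which does force $|\Delta|\geq \ell/4$ somewhere on the run. The genuine gap is exactly the step you flag as delicate: converting that offset into (a) or (b) when the adversary dissipates it gradually. Your argument there rests on the claim that any offset which is not a multiple of $\ell$ misaligns the length-$\ell$ constant runs of $c'$ against those of $c$ and therefore feeds $\Omega(\ell)$ type-(a) positions per block. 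That claim is false for this code: adjacent blocks concatenate into constant runs of length $2\ell$ (e.g.\ a block $0^\ell 1^\ell$ followed by $1^\ell 0^\ell$ produces a $1$-run of length $2\ell$ in $c$), so a shift $\delta$ with $0<\delta<\ell$ can keep an entire $1$-run of $c'$ inside a $1$-run of $c$ and contribute nothing to (a). In addition, the negative-offset branch is not disposed of by ``a symmetric argument at the head'': the $\ell/4$ coordinates at which $e'$ is active while $e$ stalls may be spread over all the $e'$-active coordinates preceding the critical run, of which there can be $\Theta(\ell\mu)$, so no boundary effect confines them to a single window of $4\lambda$ active coordinates. As written, the proof does not close, and the quantitative ``tuning'' you defer is precisely where the difficulty lives.

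For comparison, the paper avoids the scalar offset walk entirely. It decomposes $e$ and $e'$ into their $2\mu$ constant blocks of length $\ell$ (with $\ddag$'s interspersed), says that the $k$-th $1$-block of $e$ \emph{overlaps} the $k$-th $1$-block of $e'$ if at least $3\ell/4$ of the former lies inside the closed interval spanned by the latter, and then argues a two-case dichotomy: if some corresponding pair of $1$-blocks fails to overlap, at least $\ell/4$ of the $1$'s emitted by $e'$ up to that block meet $e_i<1$, giving (a); if every corresponding pair overlaps, then $c\neq c'$ forces two consecutive $1$-blocks of $e$ to be separated by strictly fewer $0$-blocks than the corresponding $1$-blocks of $e'$, and the overlap condition then pins at least $\ell/4$ $\ddag$'s of $e$ inside the intervening $0$-blocks of $e'$, giving (b). Pinning corresponding $1$-blocks against each other, rather than tracking a single offset that the adversary can launder slowly, is what lets the paper sidestep the obstacle on which your argument founders; if you want to salvage your approach, you would need a genuinely new idea playing that role.
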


\begin{proof}
Let $c,c' \in \set{0,1}^{2\ell\mu}$ be two distinct code words from $\Cbrl^\mu$, and let $e$ and $e'$ be their
respective embeddings. Note that $c$ is made up of blocks of $0$'s and $1$'s of length $\ell$.
Correspondingly, $e$ is made up of blocks of $0$'s and $1$'s of length $\ell$, with $\ddag$'s inserted at various positions. Let $I_1,\ldots,I_{2\mu}$ be the index sets that describe these 0 and 1-blocks of $e$. In other words, they satisfy: $I_j < I_{j+1}$ element-wise, $|I_j| = \ell$, and $\Set{e_i}{i \in I_j}$ equals $\set{0}$ or $\set{1}$. Furthermore, the sequence of $e_i$'s with $i \in I_1 \cup \ldots \cup I_{2\mu}$ equals $c$, and as such, for any odd $j$, one of $I_j$ and $I_{j+1}$ is a 0-block and one a 1-block.
Let $\phi : \set{1,\ldots,\mu} \to  \set{1,\ldots,2\mu}$ be the function such that $I_{\phi(k)}$ is the $k$-th 1-block in $I_1,\ldots,I_{2\mu}$.
The corresponding we can do with $c'$ and $e'$, resulting in blocks $I'_1,\ldots,I'_{2\mu}$ and function $\phi'$. For any $j$, we define  $\cl(I'_j)$ to be the smallest "interval" in $\set{1,\ldots,4\mu\ell}$ that contains $I'_j$.

For 1-blocks $I_j$ and $I'_{j'}$, we say that $I_j$ {\em overlaps} with $I'_{j'}$ if  $|I_j \cap \cl(I'_{j'})| \geq 3\ell/4$.
We make the following case distinction.

{\em Case 1:} $I_{\phi(k')}$ does not overlap with $I'_{\phi'(k')}$ for some $k'$.
If all the indices in $I_{\phi(k')} \setminus \cl(I'_{\phi'(k')})$ are larger than those in $\cl(I'_{\phi'(k')})$,
then $e'_i = 1$ for all $i \in I'_{\phi'(1)} \cup \ldots \cup I'_{\phi'(k')}$ but $e_i < 1$ for at least $\ell/4$ of these $i$'s.
A similar argument can be used when all these indices are smaller than those in $\cl(I'_{\phi'(k')})$.
If neither of the above holds, then  $e'_i = 1$ for all $i \in I'_{\phi'(k')}$ but $e_i < 1$ for at least $\ell/4$ of these $i$'s. Hence, property (a) of Definition~\ref{def:codes}  is satisfied (with parameter $\ell/4$).

{\em Case 2:} $I_{\phi(k)}$ overlaps with $I'_{\phi'(k)}$ for every $k$.
Since $c$ and $c'$ are distinct, and by the structure of the code, there must exist two subsequent 1-blocks $I_{\phi(k)}$ and $I_{\phi(k+1)}$ such that the number of 0-blocks between $I_{\phi(k)}$ and $I_{\phi(k+1)}$ is strictly smaller than the number of 0-blocks between the corresponding 1-blocks $I'_{\phi'(k)}$ and $I'_{\phi'(k+1)}$.
If there is no 0-block between $I_{\phi(k)}$ and $I_{\phi(k+1)}$ and (at least) one 0-block between $I'_{\phi'(k)}$ and $I'_{\phi'(k+1)}$ then by the assumption on the overlap, at least half of the indices $i$ in the 0-block $I'_{\phi'(k)+1}$ satisfy $e_i = \ddag$.
If there is one 0-block between $I_{\phi(k)}$ and $I_{\phi(k+1)}$ and two 0-blocks between $I'_{\phi'(k)}$ and $I'_{\phi'(k+1)}$ then at least a quarter of the indices $i \in I'_{\phi'(k)+1} \cup I'_{\phi'(k)+2}$ satisfy $e_i = \ddag$.
In both (sub)cases, property (b) of Definition~\ref{def:codes}  is satisfied (with $\lambda = \ell/4$).
\end{proof}

Plugging in the concrete secure positioning scheme from
Section~\ref{sec:3D}, we obtain a secure realization of
position-based authentication scheme in $\R^d$, in the \NPE
model.

\subsection{Position-Based Key Exchange}

The goal of a position-based key-exchange scheme is to have the
verifiers agree with honest prover $P$ at location $\pos$ on a key
$\key \in \set{0,1}^L$, in such a way that no dishonest prover has
any (non-negligible amount of) information on $\key$ beyond its
bit-length $L$, as long as he is not
located at $\pos$.%
\footnote{The length $L$ of the key may depend on the course of the scheme. In particular, an adversary may enforce it to be $0$. }
Formally,  we require the following security
properties.
\begin{itemize}
\item{\em $\eps_c$-Completeness: } If $P$ is honest and at the claimed position $\pos$, and if there is no (coalition of) dishonest prover(s), then $P$ and $V_0,\ldots,V_k$ output the same key $\key$ of positive length, except with probability~$\eps_c$.

\item{\em $\eps_s$-Security:} For any position $\pos \in
  \Hull(\pos_0,\dotsc,\pos_k)$ and for any coalition $\dP$ of
  dishonest provers at locations all different to $\pos$, the hybrid
  state $\rho_{\key E}$, consisting of the key $K$ output by the
  verifiers and the collective quantum system of $\dP$ at the end of
  the scheme, satisfies $\dist(\rho_{\key E},\rho_{\idealkey} \otimes \rho_{E}) \le \eps_s$, where \smash{$\idealkey$} is chosen independently and at random of the same bit-length as $\key$.
\end{itemize}

Note that the security properties only ensure that the {\em verifiers}
can be convinced that $\dP$ has no information on the key they obtain;
no such security is guaranteed for $P$.
 Indeed,
 $\dP$ can always honestly execute the scheme with $P$,
 acting as verifiers. Also note that the security properties do
 not provide any guarantee to the verifiers that $P$ has obtained the
 \emph{same} key that was output by the verifiers,
 in case of an active attack by $\dP$,
 but this feature can always be achieved e.g.\ with the help of a position-based authentication
 scheme by having $P$ send an authenticated hash of his key.

A position-based key-exchange scheme can easily be obtained by
taking any quantum key-distribution (QKD) scheme that requires
authenticated communication, and do the authentication by means of a
position-based authentication scheme, like the scheme from the
previous section. One subtlety to take care of is that QKD schemes
usually require {\em two-way} authentication, whereas position-based
authentication only provides authentication from the prover to the
verifiers. However, this problem can easily be resolved as follows. Whenever
the QKD scheme instructs $V_0$ (acting as Alice in the QKD scheme)
to send a message $m$ in an authenticated way to $P$ (acting as
Bob), $V_0$ sends $m$ without authentication to $P$, but in the
next step $P$ authenticates the message $m'$ he has received
(supposedly $m' = m$) toward the verifiers, who abort and output an
empty key $K$ in case the authentication fails.

Using  standard
BB84 QKD, we obtain a concrete
position-based key-exchange scheme.
The security of that scheme follows from the security of the BB84
protocol~\cite{LC99,BBBMR00-STOC, SP00,Mayers01,BHLMO05,Renner05} and of the
position-based authentication scheme.

\section{Conclusion and Open Questions} \label{sec:conclusion}
Continuing a very recent line of
research~\cite{Mal10a,Mal10b,CFGGO10,KMS10,Kent10}, we have given a general
proof that information-theoretic position-verification quantum schemes
are impossible, thereby answering an open question about the security
of schemes proposed in~\cite{KMS10} to the negative. On the positive
side, we have provided schemes secure under the assumption that
dishonest provers do not use pre-shared entanglement. Our results
naturally lead to the question: How much entanglement is needed in
order to break position-verification protocols? Can we show security
in the bounded-quantum-storage model~\cite{DFSS05} where adversaries
are limited to store, say, a linear fraction of the communicated
qubits?


\section*{Acknowledgments}
We thank Charles Bennett, Fr\'ed\'eric Dupuis and Louis Salvail for interesting
discussions.  HB would like to thank Sandu Popescu for explaining
Vaidman's scheme and pointing~\cite{CCJP10} out to him.


\small
\bibliographystyle{alpha}
\bibliography{quantum-pbc-short}

\newcommand{\etalchar}[1]{$^{#1}$}
\begin{thebibliography}{BOHL{\etalchar{+}}05}

\bibitem[BB84]{BB84}
C.H. Bennett and G.~Brassard.
\newblock {Quantum cryptography: Public key distribution and coin tossing}.
\newblock In {\em Proceedings of IEEE International Conference on Computers,
  Systems and Signal Processing}, volume 175, 1984.

\bibitem[BBB{\etalchar{+}}00]{BBBMR00-STOC}
Eli Biham, Michel Boyer, P.~Oscar Boykin, Tal Mor, and Vwani~P. Roychowdhury.
\newblock A proof of the security of quantum key distribution.
\newblock In {\em STOC'00}, pages 715--724, New York, 2000. ACM Press.

\bibitem[BBC{\etalchar{+}}93]{BBCJPW93}
Charles~H. Bennett, Gilles Brassard, Claude Cr\'epeau, Richard Jozsa, Asher
  Peres, and William~K. Wootters.
\newblock Teleporting an unknown quantum state via dual classical and
  einstein-podolsky-rosen channels.
\newblock {\em Phys. Rev. Lett.}, 70(13):1895--1899, Mar 1993.

\bibitem[BC94]{BC93}
Stefan Brands and David Chaum.
\newblock Distance-bounding protocols.
\newblock In {\em EUROCRYPT'93}, pages 344--359. Springer, 1994.

\bibitem[BCC{\etalchar{+}}10]{BCCRR10}
Mario Berta, Matthias Christandl, Roger Colbeck, Joseph~M. Renes, and Renato
  Renner.
\newblock The uncertainty principle in the presence of quantum memory.
\newblock {\em Nature Physics}, 2010.

\bibitem[BK11]{BK11}
S.~{Beigi} and R.~{Koenig}.
\newblock {Simplified instantaneous non-local quantum computation with
  applications to position-based cryptography}.
\newblock arXiv/quant-ph:1101.1065, January 2011.

\bibitem[BOHL{\etalchar{+}}05]{BHLMO05}
M.~Ben-Or, M.~Horodecki, D.W. Leung, D.~Mayers, and J.~Oppenheim.
\newblock {The universal composable security of quantum key distribution}.
\newblock In {\em TCC'05}, pages 386--406. Springer, 2005.

\bibitem[Bus04]{B04}
Laurent Bussard.
\newblock {\em Trust Establishment Protocols for Communicating Devices}.
\newblock PhD thesis, Eurecom-ENST, 2004.

\bibitem[CCJP10]{CCJP10}
S~R Clark, A~J Connor, D~Jaksch, and S~Popescu.
\newblock Entanglement consumption of instantaneous nonlocal quantum
  measurements.
\newblock {\em New Journal of Physics}, 12(8):083034, 2010.

\bibitem[CCS06]{CCS06}
Srdjan Capkun, Mario Cagalj, and Mani Srivastava.
\newblock Secure localization with hidden and mobile base stations.
\newblock In {\em IEEE INFOCOM}, 2006.

\bibitem[CFG{\etalchar{+}}10]{CFGGO10}
Nishanth Chandran, Serge Fehr, Ran Gelles, Vipul Goyal, and Rafail Ostrovsky.
\newblock Position-based quantum cryptography.
\newblock arXiv/quant-ph:1005.1750, May 2010.

\bibitem[CGMO09]{CGMO09}
Nishanth Chandran, Vipul Goyal, Ryan Moriarty, and Rafail Ostrovsky.
\newblock Position based cryptography.
\newblock In {\em CRYPTO 2009}, pages 391--407. Springer, 2009.

\bibitem[CH05]{CH05}
Srdjan Capkun and Jean-Pierre Hubaux.
\newblock Secure positioning of wireless devices with application to sensor
  networks.
\newblock In {\em IEEE INFOCOM}, pages 1917--1928, 2005.

\bibitem[Che52]{Chernoff52}
H.~Chernoff.
\newblock {A measure of asymptotic efficiency for tests of a hypothesis based
  on the sum of observations}.
\newblock {\em The Annals of Mathematical Statistics}, pages 493--507, 1952.

\bibitem[CKOR10]{CKOR10}
Nishanth Chandran, Bhavana Kanukurthi, Rafail Ostrovsky, and Leonid Reyzin.
\newblock Privacy amplification with asymptotically optimal entropy loss.
\newblock In {\em STOC'10}, pages 785--794, New York, 2010. ACM Press.

\bibitem[CT91]{CT91}
Thomas~M. Cover and Joy~A. Thomas.
\newblock {\em Elements of Information Theory}.
\newblock Wiley, 1991.

\bibitem[DFSS05]{DFSS05}
Ivan Damg{\aa}rd, Serge Fehr, Louis Salvail, and Christian Schaffner.
\newblock Cryptography in the bounded quantum-storage model.
\newblock In {\em 46th Annual IEEE Symposium on Foundations of Computer Science
  (FOCS)}, pages 449--458. IEEE, 2005.

\bibitem[Fan61]{Fano61}
Robert Fano.
\newblock {\em Transmission of information; a statistical theory of
  communications}.
\newblock M.I.T. Press, 1961.

\bibitem[GLM02]{GLM02}
Vittorio Giovannetti, Seth Lloyd, and Lorenzo Maccone.
\newblock Quantum cryptographic ranging.
\newblock {\em Journal of Optics B}, 4(4):042319, Aug 2002.

\bibitem[Hol73]{Holevo73}
A.~S. Holevo.
\newblock Information-theoretical aspects of quantum measurement.
\newblock {\em Problemy Pereda\v ci Informacii}, 9(2):31--42, 1973.

\bibitem[IH08]{IH08}
Satoshi Ishizaka and Tohya Hiroshima.
\newblock Asymptotic teleportation scheme as a universal programmable quantum
  processor.
\newblock {\em Phys. Rev. Lett.}, 101(24):240501, Dec 2008.

\bibitem[IH09]{IH09}
Satoshi Ishizaka and Tohya Hiroshima.
\newblock Quantum teleportation scheme by selecting one of multiple output
  ports.
\newblock {\em Phys. Rev. A}, 79(4):042306, Apr 2009.

\bibitem[Ken10]{Kent10}
Adrian Kent.
\newblock Quantum tagging with cryptographically secure tags.
\newblock arXiv/quant-ph:1008.5380, Aug 2010.

\bibitem[KMS10]{KMS10}
Adrian Kent, Bill Munro, and Tim Spiller.
\newblock Quantum tagging: Authenticating location via quantum information and
  relativistic signalling constraints.
\newblock arXiv/quant-ph:1008.2147, Aug 2010.

\bibitem[KMSB06]{KMSB06}
Adrian Kent, William Munro, Tomothy Spiller, and Raymond Beausoleil.
\newblock Tagging systems, 2006.
\newblock US patent nr 2006/0022832.

\bibitem[KR09]{KR09}
Bhavana Kanukurthi and Leonid Reyzin.
\newblock Key agreement from close secrets over unsecured channels.
\newblock In {\em EUROCRYPT'09}, pages 206--223. Springer, 2009.

\bibitem[LC99]{LC99}
Hoi-Kwong Lo and H.~F. Chau.
\newblock {Unconditional Security of Quantum Key Distribution over Arbitrarily
  Long Distances}.
\newblock {\em Science}, 283(5410):2050--2056, 1999.

\bibitem[LL11]{LL11}
Hoi-Kwan Lau and Hoi-Kwong Lo.
\newblock Insecurity of position-based quantum-cryptography protocols against
  entanglement attacks.
\newblock {\em Phys. Rev. A}, 83(1):012322, Jan 2011.

\bibitem[Mal10a]{Mal10a}
Robert~A. Malaney.
\newblock Location-dependent communications using quantum entanglement.
\newblock {\em Phys. Rev. A}, 81(4):042319, Apr 2010.

\bibitem[Mal10b]{Mal10b}
Robert~A. Malaney.
\newblock Quantum location verification in noisy channels, Apr 2010.
\newblock arXiv/quant-ph:1004.2689.

\bibitem[May01]{Mayers01}
Dominic Mayers.
\newblock Unconditional security in quantum cryptography.
\newblock {\em J.~ACM}, 48(3):351--406, 2001.

\bibitem[NC00]{NC00}
Michael~A. Nielsen and Isaac~L. Chuang.
\newblock {\em Quantum Computation and Quantum Information}.
\newblock Cambridge university press, 2000.

\bibitem[RB09]{RB09}
JM~Renes and JC~Boileau.
\newblock {Conjectured strong complementary information tradeoff.}
\newblock {\em Phys. Rev. Let.}, 103(2):020402, 2009.

\bibitem[Ren05]{Renner05}
Renato Renner.
\newblock {\em Security of Quantum Key Distribution}.
\newblock PhD thesis, ETH Z\"urich (Switzerland), September 2005.
\newblock \url{http://arxiv.org/abs/quant-ph/0512258}.

\bibitem[RW03]{RW03}
Renato Renner and Stefan Wolf.
\newblock Unconditional authenticity and privacy from an arbitrarily weak
  secret.
\newblock In {\em CRYPTO'03}, pages 78--95. Springer, 2003.

\bibitem[SP00]{SP00}
Peter~W. Shor and John Preskill.
\newblock Simple proof of security of the {BB84} quantum key distribution
  protocol.
\newblock {\em Phys. Rev. Let.}, 85(2):441--444, Jul 2000.

\bibitem[SP05]{SP05}
Dave Singelee and Bart Preneel.
\newblock Location verification using secure distance bounding protocols.
\newblock In {\em IEEE MASS'10}, 2005.

\bibitem[SSW03]{SSW}
Naveen Sastry, Umesh Shankar, and David Wagner.
\newblock Secure verification of location claims.
\newblock In {\em WiSe'03}, pages 1--10, 2003.

\bibitem[Vai03]{Vaidman03}
Lev Vaidman.
\newblock Instantaneous measurement of nonlocal variables.
\newblock {\em Phys. Rev. Lett.}, 90(1):010402, Jan 2003.

\bibitem[VN04]{VN04}
Adnan Vora and Mikhail Nesterenko.
\newblock Secure location verification using radio broadcast.
\newblock In {\em OPODIS'04}, pages 369--383, 2004.

\bibitem[ZLFW06]{ZLFW06}
Yanchao Zhang, Wei Liu, Yuguang Fang, and Dapeng Wu.
\newblock Secure localization and authentication in ultra-wideband sensor
  networks.
\newblock {\em IEEE Journal on Selected Areas in Communications}, 24:829--835,
  2006.

\end{thebibliography}
\normalsize

\appendix

\section{Proofs}\label{app:technical}

\subsection{Proof of Lemma~\ref{lemma:Average}}

In this section we prove the following lemma (Lemma~\ref{lemma:Average}):
{\it
For any tri-partite state $\rho_{ABY}$ with classical $Y$,
$$\Hone(A|BY) = \sum_y P_Y(y) \Hone(\rho_{AB}^y|B).$$
}

We first consider the case of an ``empty" $B$.
$Y$ being classical means that $\rho_{AY}$ is of the form $\rho_{AY} = \sum_y P_Y(y) \rho_{A}^y \otimes \proj{y}$.
Let us write $\lambda^y_1,\ldots,\lambda^y_n$ for the eigenvalues of $\rho_{A}^y$. Note that the eigenvalues of $\rho_{AY}$ are given by $P_Y(y) \lambda^y_i$ with $y \in \cal Y$ and $i \in \set{1,\ldots,n}$.
It follows that
\begin{align*}
\Hone(\rho_{AY}&|Y) = \Hone(\rho_{AY}) - \Hone(\rho_{Y})
= -\tr\bigl(\rho_{AY}\log(\rho_{AY})\bigr) + \tr\bigl(\rho_{Y}\log(\rho_{Y})\bigr) \\[1ex]
&= - \Big( \sum_{y,i}P_Y(y) \lambda^y_i \log\bigl(P_Y(y)\lambda^y_i\bigr) - \sum_{y} P_Y(y) \log\bigl(P_Y(y)\bigr) \Big) \\
&= - \sum_{y}P_Y(y) \sum_i \lambda^y_i \log\bigl(\lambda^y_i\bigr)
= \sum_{y} P_Y(y)  \Hone(\rho_{A}^y) \, .
\end{align*}
In general, we can conclude that
\begin{align*}
\Hone(\rho_{ABY}&|BY) = \Hone(\rho_{ABY}) - \Hone(\rho_{BY})
=  \sum_{y} P_Y(y)  \Hone(\rho_{AB}^y) -  \sum_{y} P_Y(y)  \Hone(\rho_{B}^y) \\
&=   \sum_{y} P_Y(y)  \big(\Hone(\rho_{AB}^y) -   \Hone(\rho_{B}^y) \big)
=  \sum_{y} P_Y(y)  \Hone(\rho_{AB}^y|B) \, ,
\end{align*}
which proves the claim.
\qed

\subsection{Proof of Corollary~\ref{cor:CIT}}

By Lemma~\ref{lemma:Average}, we can write
\begin{align*}\nonumber
\Hone(X|\Theta E) + \Hone(X|\Theta F) 
&=\frac{1}{2^n} \sum_{\theta} \Hone(\rho_{XE}^\theta|E) + \frac{1}{2^n} \sum_{\theta} \Hone(\rho_{XF}^\theta|F) \nonumber\\
&= \frac{1}{2^n} \sum_{\theta} \big(\Hone(\rho_{XE}^\theta|E) +\Hone(\rho_{XF}^{\bar{\theta}}|F) \big) \, .
\end{align*}
Note that $\rho_{XE}^\theta$ is obtained by measuring $A$ of $\ket{\psi_{AEF}}$ in basis $\theta$ (and ignoring $F$), and $\rho_{XF}^{\bar{\theta}}$ is obtained by measuring $A$ of $\ket{\psi_{AEF}}$ in the complementary basis $\bar{\theta}$ (and ignoring $E$).
Hence, Theorem~\ref{thm:CIT} applies and we can conclude that $\Hone(\rho_{XE}^\theta|E) +\Hone(\rho_{XF}^{\bar{\theta}}|F) \geq n$ and thus $\Hone(X|\Theta E) + \Hone(X|\Theta F) \geq n$.
\qed

\section{Instantaneous Nonlocal Quantum Computation With $N$ Parties}
\label{sec:nparties}

We generalize the above result to any $N$-party distributed computation,
by generalizing Theorem~\ref{thm:local+} to the case of $N$-parties.
We assume that some distinguished user holds the system $A$ and the information $x \in {\cal X}$, while for the rest, each user $p=1\ldots N-1$ holds the system ${B_p}$ and the classical input $y_p \in {\cal Y}_p$. Let us call the user who holds $\H_A$ Alice, and the rest of the users $\user_p$ with $p=1\ldots N-1$. Denote $\H_{all}\triangleq \H_A\otimes \H_{B_1}\otimes\cdots\otimes\H_{B_{N-1}} $.
The parties share an arbitrary and unknown state
$\ket{\psi} \in \H_{all} \otimes \H_E$, and a unitary operation $U$  defined on $\H_{all}$. The
unitary $U$ is determined by $x$ and $\{y_p\}$ out of some fixed family of unitaries.

\begin{theorem}\label{thm:local+N}
For every family $\set{U_{x,y_1,\ldots,y_{N-1}}}$ of unitaries defined
on $\H_{all}$ and for every $\eps > 0$, given sufficiently many
pairwise shared EPR pairs, there exist families $\{{\cal A}_x\}$, $\{{\cal B}^1_{y_1}\}$, $\ldots$, 
$\{{\cal B}^{N-1}_{y_{N-1}}\}$ of local operations, acting on Alice's and $\user_p$'s respective sides, with the following property. For any initial state $\ket{\psi} \in \H_{all}\otimes \H_E$ and for every $x \in \cal X$ and $y_1,\ldots,y_{N-1} \in {\cal Y}_1\times\cdots\times{\cal Y}_{N-1}$, the joint execution ${\cal A}_x \otimes {\cal B}^1_{y_1}\otimes\cdots\otimes {\cal B}^{N-1}_{y_{N-1}}$  transforms the state $\ket{\psi}$ into $\ket{\varphi'}$ and provides classical outputs~$k$ to Alice and~$\ell_p$ to~$\user_p$, such that the following holds except with probability~$\eps$. The state~$\ket{\varphi'}$ coincides with $\ket{\varphi} = (U_{x,y_1,\ldots,y_{N-1}} \otimes \I) \ket{\psi}$ up to local qubit-wise operations on systems $A$~and~${B_p}$ for $p=1\ldots N-1$, that are determined by $k$~and~$\{\ell_p\}$.
\end{theorem}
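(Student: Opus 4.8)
The plan is to prove Theorem~\ref{thm:local+N} by induction on the number of parties $N$, using Theorem~\ref{thm:local+} both as the base case ($N=2$) and as the engine of the inductive step. For the step, I would single out Alice and regard the remaining users $\user_1,\ldots,\user_{N-1}$ as one \emph{virtual} party ``Bob'' holding the composite system $B_1\cdots B_{N-1}$ and the joint classical input $\vec{y}=(y_1,\ldots,y_{N-1})$. Applying Theorem~\ref{thm:local+} to this bipartition yields local operations $\mathcal{A}_x$ for Alice and $\mathcal{B}_{\vec{y}}$ for Bob that transform $\ket{\psi}$ into a state agreeing with $(U_{x,y_1,\ldots,y_{N-1}}\otimes\I)\ket{\psi}$ up to local qubit-wise Pauli corrections on $A$ and on $B_1\cdots B_{N-1}$, except with probability at most $\eps/2$. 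The only defect is that $\mathcal{B}_{\vec{y}}$ is a \emph{single} operation on the whole coalition's system, which the physically separated users $\user_p$ cannot carry out without communication.

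To remove this defect I would implement $\mathcal{B}_{\vec{y}}$ itself by the $(N-1)$-party result. Since $\mathcal{B}_{\vec{y}}$ is a fixed local operation for each $\vec{y}$ (in the construction of Theorem~\ref{thm:local+} it consists of applying unitaries to teleportation channels and Bell-measuring them), the principle of deferred measurement lets me rewrite it as a unitary $W_{\vec{y}}$ acting on the coalition's systems together with fresh ancillae, followed by a measurement of those ancillae. Implementing $W_{\vec{y}}$ is precisely an instance of instantaneous nonlocal quantum computation for the $N-1$ users $\user_1,\ldots,\user_{N-1}$, whose distributed classical input is $\vec{y}$; by the induction hypothesis they can realize it with only local operations and pairwise shared EPR pairs, correctly up to local qubit-wise Paulis and except with probability at most $\eps/2$. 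Each $\user_p$ then measures its share of the ancillae to recover the classical outcomes that, in the two-party scheme, were collected into Bob's output $\ell$, and these become the $\ell_p$'s; Alice, in parallel, simply runs $\mathcal{A}_x$.

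What remains is bookkeeping, which I expect to be the main obstacle. I would verify that the extra qubit-wise Pauli corrections introduced by the inductive subroutine, as well as those from the intermediate teleportations, are all \emph{local to the individual users}, are classically recorded by the respective parties, and can therefore be folded into the final qubit-wise correction determined by $k$ and $\{\ell_p\}$; here one uses that a product of qubit-wise Paulis is again qubit-wise, and that the selection steps of Theorem~\ref{thm:local+} only require $\mathcal{B}_{\vec{y}}$ to be implemented up to such Paulis, so the residual corrections do no harm. Choosing the failure parameters of the two invocations to be $\eps/2$ and taking a union bound gives overall error at most $\eps$, and since each level of the recursion only multiplies the (finite) EPR requirement, ``sufficiently many pairwise shared EPR pairs'' still suffices. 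The delicate points to get right are the deferred-measurement reformulation of $\mathcal{B}_{\vec{y}}$ into a genuine unitary to which the induction hypothesis applies, and the precise tracking of which party holds which classical correction data, so that the final state equals $(U_{x,y_1,\ldots,y_{N-1}}\otimes\I)\ket{\psi}$ up to operations determined exactly by $k$ and $\{\ell_p\}$.
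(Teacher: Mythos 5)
Your route is genuinely different from the paper's. The paper also inducts on the number of parties, but its step works directly on the quantum state: Alice teleports $\ket{\psi}$ to $\user_1$ over the channel labelled by her input $x$ (recording $k_\circ$); then, for \emph{every} possible label $i$, the $N-1$ users run the $(N-1)$-party protocol on the state sitting at $\user_1$'s end of channel $i$, with $\user_1$ acting as the distinguished user holding the merged classical input $(i,y_1)$ and with the family $\{U_{x=i,y_1,\ldots,y_{N-1}}\}$; $\user_1$ teleports each result back, Alice keeps channel $|{\cal X}|+x$, and if $k_\circ\neq 0\cdots 0$ the residual Paulis are absorbed algebraically into a new unitary $U'_{x',y'_1,\ldots,y'_{N-1}}$ and the whole step is repeated, exactly as in the two-party proof. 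You instead collapse the coalition into a virtual Bob, invoke Theorem~\ref{thm:local+} for the Alice-versus-coalition bipartition, and then use the induction hypothesis to realize Bob's local instrument $\mathcal{B}_{\vec y}$ distributively. This is a legitimate and arguably more modular decomposition, but it buys you two pieces of technical debt that the paper never incurs. First, you must actually carry out the deferred-measurement purification of $\mathcal{B}_{\vec y}$: in the construction of Theorem~\ref{thm:local+}, Bob's later-iteration unitaries are adaptive on his own earlier Bell outcomes, so $W_{\vec y}$ is a unitary with quantum controls acting on \emph{all} of Bob's channel halves across all iterations --- a system of double-exponentially many qubits --- and the induction hypothesis is then applied at that scale (so your entanglement cost grows as a tower in $N$, versus the paper's repeated application to $n$-qubit states). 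Second, and more importantly, the claim that ``the selection steps of Theorem~\ref{thm:local+} only require $\mathcal{B}_{\vec y}$ to be implemented up to such Paulis'' is the crux and is asserted rather than proved: you need to check that every output qubit of $W_{\vec y}$ on which a residual Pauli can land is a register that is immediately measured in the computational basis, so that $Z$-corrections are harmless and $X$-corrections merely XOR the reported outcomes with a mask determined by the subprotocol's distributed classical outputs, and that the two-party recursion (whose algebra $\ket{\varphi'} = V_{\ell_{\circ,x}} U_{x,y} V_{k_\circ}\ket{\psi}$ presumes exact teleportation corrections) still closes when $\ell_\circ$ is only recoverable as reported-outcome-plus-mask. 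I believe this can be made to work, but it is precisely where your proof is currently thinnest; the paper's induction sidesteps it entirely because the inner protocol's residual qubit-wise Paulis are simply folded into the next problem instance rather than into a purified implementation of a party's instrument.
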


\begin{proof}
As in the two-party case, we may assume that Alice holds $\ket{\psi}$ and that for each player $\user_p$, $\dim \H_{B_p}=1$. Furthermore, we assume that the joint state of $A$ and $\{B_p\}$ is pure, and thus we may ignore system $E$. 
We prove the theorem by induction on the number of parties. As we have already proven the above for $N=2$ (and the case of $N=1$ is trivial), let us assume that
the proposition holds for $N=c$ and show it also holds for $N=c+1$.  
\begin{enumerate}
\item
Alice begins by teleporting the state $\ket{\psi}$ to $\user_1$
through teleportation channel number~$x$ she shares with~$\user_1$. Let $k_\circ \in\{0,1,2,3\}^n$ be the outcome of her measurement performed during the teleportation.

\item For every $i=1\ldots |{\cal X}|$, denote with $\ket{\varphi_i}$ the state at $\user_1$'s end of the $i^\text{th}$ teleportation channel. 
Next, for $i=1,\ldots,|{\cal X}|$, users $\user_1$ to $\user_c$ 
perform the scheme given by the induction assumption\footnote{To be more precise, the scheme is performed with the given instance $\mathbf U$, reduced to the case of $c$ classical inputs, by ``merging'' the first two inputs, i.e., $\{ U_{z_1,z_2,\ldots,z_c} \}_{z_1\in ({\cal X}\times{\cal Y}_1), z_2\in{\cal Y}_2, \ldots, z_c \in {\cal Y}_c }$.} on the input state~$\ket{\varphi_i}$ with respective classical information $((i,y_1), y_2, y_3, \ldots, y_c)$, and 
with 
$\{ U^i_{y_1,\ldots,y_{c}} := U_{x=i,y_1,\ldots,y_{c}} \}$ being the family of unitaries.
At the end of the induction step $\user_1$~holds the state~$\ket{\varphi'_i}$ and each of $\user_p$
obtains a classical output~$\ell^i_p$,\footnote{For simplifying notation, we denote by $\ell^i_1$ the classical information $k^i$ that $\user_1$ obtains when acting as the distinguished user in the scheme given by the induction assumption.} such that for every $i$ 
the state~$\ket{\varphi'_i}$ coincides with $(U_{x=i,y_1,\ldots,y_{N-1}}\otimes \I)\ket{\varphi_i}$ up to local qubit-wise operations determined by~$\{\ell^i_p\}$.

\item For every $i$, $\user_1$ teleports $\ket{\varphi'_i}$ back to Alice, using teleportation channel number~$|{{\cal X}|+i}$. Let $\ell_{\circ,i} \in \{0,1,2,3\}^n$ be the outcome of his measurement performed during the teleportation.

\item 
Alice specifies the state at her end of teleportation channel number $|{\cal X|}+x$ to be the state~$\ket{\varphi'}$. 
\end{enumerate}
Clearly, if $k_\circ=0\cdots0$ then the parties $\user_1,\ldots,\user_c$ on  teleportation channel $i=x$ perform instantaneous quantum computation of the the unitary $U_{x,y_1,\ldots,y_{c}}$ on the state $\ket{\psi}$, obtaining the state~$\ket{\varphi'_x}$ which coincides with $U_{x,y_1,\ldots,y_{c}}\ket{\psi}$ up to some local qubit-wise operations determined by their classical outputs $\ell^x_1, \ldots, \ell^x_c$, that is,
$ \ket{\varphi'_x}=W_{\ell^x_1, \ldots, \ell^x_c}U_{x,y_1,\ldots,y_{c}}\ket{\psi}$, where $W$ is a tensor product of Pauli matrices determined by their classical input. The state $\ket{\varphi'}$ obtained by Alice at the $|{\cal X}|+x$ teleportation channel coincides with $\ket{\varphi'_x}$ up to local qubit-wise operations determined by $\ell_{\circ,x}$, which proves the theorem for this case.

On the other hand, assume $k_\circ \ne 0\cdots0$, then by the induction assumption
\begin{align*}
\ket{\varphi'} &= V_{\ell_{\circ,x}} W_{\ell^x_1,\ldots, \ell^x_c} U_{x,y_1,\ldots,y_c} V_{k_\circ}\ket{\psi} \\
&= V_{\ell_{\circ,x}} W_{\ell^x_1,\ldots, \ell^x_c} U_{x,y_1,\ldots,y_c} V_{k_\circ} U_{x,y_1,\ldots,y_c}^\dagger \ket{\varphi}
\end{align*}
where $V_{\ell_{\circ,x}}$ and $V_{k_\circ}$ are tensor products of Pauli matrices, and $W_{\ell^x_1,\ldots, \ell^x_c}$ is the local qubit-wise (Pauli) operations asserted by the induction assumption.
Thus, setting $\ket{\psi'}:= \ket{\varphi'}$, $x' := (x,k_\circ)$, $y'_1 := (y_1,\ell_\circ, \ell^x_1)$ and $y'_p:=(y_p, \ell^x_p)$ for $p=2...c$, and letting 
\[U'_{x',y'_1, \ldots, y'_c} := U_{x,y_1,\ldots, y_c} V_{k_\circ} U_{x,y_1,\ldots,y_c}^\dagger W_{\ell^x_1,\ldots, \ell^x_c} V_{\ell_{\circ,x}},\] the state~$\ket{\varphi}$ can be written as $\ket{\varphi} = U'_{x',y'_1,\ldots, y'_c} \ket{\psi'}$. Again, we are back to the original problem of applying a unitary, $U'_{x',y'_1,\ldots,y'_c}$, to a state, $\ket{\psi'}$, held by Alice, where the unitary depends on classical information $x'$ and $\{y'_p\}$, known by Alice and the users $\user_p$, respectively. 
We complete the proof by recalling that the success probability per round is constant which depends only on~$\dim \H_{all}$.
Assuming sufficient number of pairwise shared EPR pairs, re-applying  the above procedure sufficiently many times to the resulting new problem instances guarantees that except with arbitrary small probability, the state $\ket{\varphi'}$ will be of the required form at some point.
\end{proof}

\end{document}